\documentclass[lettersize,journal]{IEEEtran}
\usepackage{amsmath,amsfonts}
\usepackage[ruled,linesnumbered]{algorithm2e}
\usepackage{bm} 
\usepackage{varwidth} 
\usepackage{array}
\usepackage{caption}
\usepackage{subcaption}
\usepackage{textcomp}
\usepackage{stfloats}
\usepackage{url}
\usepackage{verbatim}
\usepackage{graphicx}
\usepackage{amsthm}
\newtheorem{assumption}{Assumption}
\usepackage{amsmath, amssymb, amsthm}
\newtheorem{theorem}{Theorem}

\usepackage{orcidlink}
\usepackage{amssymb}
\usepackage{multirow}
\usepackage{xcolor}
\usepackage{float}
\usepackage{gensymb}
\usepackage{hyperref}
\usepackage{booktabs}
\usepackage{lettrine}
\usepackage[style=ieee, citestyle=numeric-comp, backend=biber]{biblatex}
\AtBeginBibliography{\small}
\begin{document}

\title{Actor-Identifier-Critic Reinforcement Learning for Adaptive Model-Free Optimal Control of Nonlinear Systems with Stochastic Packet Dropouts}


\author{Kianoush~Aqabakee\textsuperscript{\large\orcidlink{0009-0008-0450-5360}}, Kosar Behnia\textsuperscript{\large\orcidlink{0000-0002-1014-162X}}, Amirhossein~{Heydarian~Ardakani}\textsuperscript{\large\orcidlink{0000-0003-3989-5481}}, \\Farzaneh~Abdollahi\textsuperscript{\large\orcidlink{0000-0003-4957-987X}}, and Elham Shirazi\textsuperscript{\large\orcidlink{0000-0002-1274-1506}}
\thanks{K.~Aqabakee, K.~Behnia and F.~Abdollahi are with the Department of Electrical Engineering, Amirkabir University of Technology, Tehran, Iran, email: kianoush.aqabakee@aut.ac.ir, kosar.behnia@aut.ac.ir, f\_abdollahi@aut.ac.ir. A.H.~Ardakani and E.~Shirazi are with the Faculty of Engineering Technology, University of Twente, Enschede, Netherlands, email: a.heydarian@utwente.nl, e.shirazi@utwente.nl.}}

\markboth{Preprint,~Vol.~.., No.~.., March~2026}%
{Shell \MakeLowercase{\textit{et al.}}: A Sample Article Using IEEEtran.cls for IEEE Journals}

\maketitle

\begin{abstract}
Packet dropouts in control systems poses a critical challenge, as it can significantly compromise system performance and stability. In these conditions, classical controllers often struggle to deliver effective control, as they rely on accurate system models, which may not always be available.
This paper proposes a novel Actor-Identifier-Critic~(AIC) controller to address model-free tracking control of nonlinear systems in the presence of packet dropouts in both the controller-to-actuator and sensor-to-controller channels. 
Using an identifier to learn the system dynamics, the proposed controller is able to handle packet dropouts in the communication link and facilitate gradient propagation from the critic to the actor within a model-free control framework.
The performance of the proposed method is demonstrated on two nonlinear SIMO and MIMO systems and a case study on power system stability subject to stochastic packet dropouts.
\end{abstract}

\begin{IEEEkeywords}
Model-free tracking control, reinforcement learning, actor-critic, optimal control, packet dropouts
\end{IEEEkeywords}

\section{Introduction}\label{sec1}



\lettrine[lines=2]{O}{ptimal} control is a mathematical framework for guiding dynamic systems toward desired objectives, balancing performance and constraints.
These strategies have been applied to address a variety of challenges in control systems, such as model uncertainty~\cite{9416181, WEN2022368}, input/output saturation, and state constraints~\cite{9463406}.
However, these approaches rely on the availability of an accurate system model, which is often unavailable in real-world applications.


Researchers use approximation methods, such as neural networks~(NNs), to obtain optimal control strategies for uncertain dynamical systems.
In this regard, authors in~\cite{9397314} proposed a NN-based observer method for obtaining optimal control in the presence of external disturbances. To ensure observer stability, NN weights are updated based on the Lyapunov stability theorem.
Other NN-based optimal output feedback strategies are proposed in~\cite{9463406, Inputsat} for uncertain dynamical systems with input saturation and state constraints.
The authors in~\cite{AdaptiveFuzzy} estimated the uncertain dynamics of nonlinear systems using fuzzy logic systems and proposed an output feedback controller using the adaptive backstepping approach. 
A combined approach using backstepping and reinforcement learning~(RL) was introduced in~\cite{BackRL} to avoid the violation of state constraints for stochastic nonlinear systems.  
Authors in~\cite{critic} presented a strategy based on dynamic programming to find optimal control in an online manner. In this strategy, an NN was used to identify nonlinear system dynamics. Then, an online critic NN is designed to solve the Hamilton–Jacobi–Bellman~(HJB) equation. 


In controller design, it is often assumed that the transmitted data is accurate and that the communication links are reliable. However, in practice, different factors, such as packet dropouts, can compromise the reliability of communication networks~\cite{8128911}.
Packet dropouts in networked control systems occur due to network congestion, interference, or hardware malfunctions, leading to information loss that can degrade system performance and compromise stability~\cite{7920392}. Hence, effective strategies must be developed to mitigate data loss and ensure continuous, accurate control despite the unreliability of communication links.
To mitigate the impact of packet dropouts in linear control systems, researchers have used approaches, including model predictive control (MPC)~\cite{7501536,cai2019robust}, adaptive control~\cite{JIANG2024111690}, and fuzzy control~\cite{10366857}. Stochastic models, such as Bernoulli processes, have been used to further characterize communication uncertainties for robust control strategies~\cite{chen2023data,10332447}.
For nonlinear systems subject to packet dropouts, researchers have developed methods such as fuzzy approximation~\cite{10473199}, adaptive control~\cite{9802886, 10571932}, and sliding mode fault-tolerant control~\cite{10106049}.
However, these methods require at least partial knowledge of the system dynamics to be able to handle packet dropouts.



This paper addresses model-free tracking control of nonlinear systems with unknown dynamics in the presence of stochastic sensor and actuator packet dropouts by introducing an Actor-Identifier-Critic~(AIC) controller. 
The proposed controller incorporates an identifier to learn the system dynamics, allowing it to compensate for dropouts in the communication links between the plant and the controller. Moreover, the identifier facilitates gradient propagation from the critic to the actor within a model-free control framework.
The optimality of the controller is studied using the HJB equation, and its stability analysis is provided based on Lyapunov’s stability theorem.
The performance of the controller is evaluated on single-input multi-output~(SIMO) and multi-input multi-output~(MIMO) nonlinear systems and a power system case study under different stochastic dropout probabilities.
The main contributions of this paper are:
\begin{enumerate} 
    \item An AIC controller for model-free adaptive tracking control of nonlinear systems. The method employs an actor-identifier-critic structure, facilitating the simultaneous estimation and control of the system.
    
    \item Handling stochastic sensor and actuator packet dropouts through the estimation of missing data by using a stable identifier. It also plays an intermediary role between the critic and the actor for gradient propagation.
    
    \item Stability analysis of the controller based on Lyapunov's stability theorem.
\end{enumerate}


The remainder of the paper is structured as follows: Section~\ref{sec2} formulates the tracking control problem of nonlinear systems with dropouts. Section~\ref{sec3} presents the proposed RL-based optimal control strategy. Section~\ref{sec4} evaluates the performance of the proposed controller. Lastly, Section~\ref{sec5} concludes the paper and outlines future research directions.
\section{Problem Statement}\label{sec2}

\subsection{System Dynamic Model}
The continuous-time dynamic model of an affine nonlinear system is defined as follows~\cite{slotine1991applied}:

\begin{equation}
	\label{eq:sys_dynamic}
	\dot{x}(t) = f(x(t)) + g(x(t))u(t),
\end{equation}
where $t$ refers to time, $x \in \mathbb{R}^{n_x}$ represents the system states, $u \in \mathbb{R}^{n_u}$ denotes the plant input, and $f(\cdot) : \mathbb{R}^{n_x} \to \mathbb{R}^{n_x}$ and $g(\cdot) : \mathbb{R}^{n_x} \to \mathbb{R}^{n_x \times n_u}$ describe the unknown system dynamics. 

By  adding
and subtracting $A_cx(t)$ from~\eqref{eq:sys_dynamic}, the following can be obtained:
\begin{align}
  \label{eq:msystem}
        &\dot{x}(t) = A_cx(t) + f_c(x(t)) + g(x(t))u(t),\\[5pt]
        &f_c(x(t)) = f(x(t)) - A_cx(t) , 
\end{align}
where $A_c$ is a diagonal Hurwitz matrix.

To apply the state transition property of a Markov process, we use the derivative definition based on the Euler’s methods~\cite{biswas2013discussion}.
The time derivative of the system states are represented as: 
\begin{equation}
\label{eq:sample_time}
    \dot{x}(t) = \lim_{\Delta t \to 0} \frac{x(t + \Delta t) - x(t)}{\Delta t}.
\end{equation}
Using this definition, future system states can be predicted based on the current states and their derivatives. 
This is crucial for calculating the expected future rewards using the Bellman equation. 
The evolution of states over time is represented as:
\begin{equation}
    \label{eq:x_future}
    x(t+\Delta t) = x(t) + \int_{t}^{t+\Delta t} \dot{x}(\tau) d\tau.
\end{equation}

The following assumptions are considered throughout this article:


\begin{assumption}\label{as:0}
    The nonlinear system in~\eqref{eq:sys_dynamic} is observable.
\end{assumption}

\begin{assumption}\label{as:1}
    There exists an unknown constant term $\bar{g}$ such that $\parallel g(\cdot)\parallel  \le \bar{g}$.
\end{assumption}

\begin{assumption}\label{as:2}
	The system dynamics $f(\cdot)$ and $g(\cdot)$ in~\eqref{eq:sys_dynamic} are assumed to be continuous Lipschitz functions.
\end{assumption}

Assumption~\ref{as:0} is not conservative and many nonlinear systems are observable~\cite{1677582, 7460133}. Assumptions~\ref{as:1} and~\ref{as:2} are general and non-restrictive assumptions. Numerous nonlinear systems satisfy Lipschitz continuous condition~\cite{10387713, 9802886} and in many practical systems, $g(\cdot)$ remains bounded~\cite{esfandiari2022neural, 10387713}.


Considering the above mentioned assumptions, evolution of the system state $x(t)$ over small, finite time steps $\Delta t > 0$ is approximated as:
\begin{equation}
    x(t + \Delta t) \approx x(t) + \dot{x}(t) \Delta t,
\end{equation}
where the smoothness of $\dot{x}$(t) is concluded by Assumption~\ref{as:2}. This ensures that the state evolution remains finite and well-defined.

A packet dropout is modeled by a stochastic variable~$\gamma$, which determines whether the signal is received on the plant side $\gamma_s$ or on the controller side $\gamma_c$.
When a dropout occurs on the plant side, the plant state received by the controller can be described as:
\begin{align}
x_{s}(t) = \gamma_s x(t),
\end{align}
where $x_{s} \in \mathbb{R}^{n_x}$ represents the plant state feedback received on the controller side, and $\gamma_s$ is a Bernoulli variable defined as follows:
\begin{equation}
	\gamma_s = \begin{cases}
	1 & \text{No dropout} \\
	0 & \text{Dropout}.
	\end{cases}
\end{equation}
To model a dropout on the controller side, the system input (i.e., the control command on the plant side) is expressed as:
\begin{equation}
u(t) = \gamma_c u_c(t),
\end{equation}
where $u_c \in \mathbb{R}^{n_u}$ is the controller output and  $\gamma_c \in \mathbb{R}$  is a Bernoulli variable as follows:
\begin{equation}
	\gamma_c = \begin{cases}
	1 & \text{No dropout} \\
	0 & \text{Dropout}.
	\end{cases}
\end{equation}
The dropout events are modeled by the probabilities $P(\gamma_s=1)=\bar{\gamma}_s$ and $P(\gamma_c=1)=\bar{\gamma}_c$, indicating the stochasticity of dropouts from the plant sensor and the controller, respectively.

\subsection{Tracking Control Formulation}\label{subsection:Optimization_Formulation}

The tracking control objective is to generate an action~$\hat{u}_c(t)$ that ensures the system state~$x(t)$ defined in~\eqref{eq:sys_dynamic}, tracks a given reference trajectory $x_d(t)$. In this regard, a quadratic cost function is considered based on the tracking error and the control command.
The cost function for the tracking problem is given by:

\begin{equation}
\label{eq:J}
    J(e(t), \hat{u}_c(t))  = \int_{t}^{\infty}  \left[ e(\tau)^T Q e(\tau) + \hat{u}_c(\tau)^T R \hat{u}_c(\tau)\right] d\tau,
\end{equation}
where $e(t) = x(t)-x_d(t)$ is the tracking error and $Q$ and $R$ are positive definite matrices to weight the tracking error and the control command, respectively.

\begin{assumption}\label{as:3}
	The desired trajectory $x_d$ and its time derivative are smooth and bounded. 
\end{assumption}

The value function for the tracking problem can be defined based on the Bellman equation~\cite{9802886}:
\begin{equation}
\label{eq:value_function}
    \begin{aligned}    
    \mathbf{V}(e(t)) = \mathbb{E} \big\{ \int_{t}^{t+\Delta t}  \left[ e(\tau)^T Q e(\tau) + \hat{u}_c(\tau)^T R \hat{u}_c(\tau)\right] d\tau\\
    + \mathbf{V}(e(t+\Delta t)) \mid e(t) \big\}.
    \end{aligned}
\end{equation}
Using stochastic control theory~\cite{9802886}, this value function is expanded to consider packet dropouts:
\begin{equation}
    \label{eq:optimization_with_dr}
    \begin{aligned}
        \mathbf{V}(e(t), \hat{u}_c(t)) =& \ \int_t^{t+\Delta t} \big[ e(\tau)^T Q e(\tau) + \hat{u}_c(\tau)^T R \hat{u}_c(\tau) \big] d\tau \\
        &+ \bar{\gamma}_c \mathbf{V}(e(t+\Delta t)) \Bigg|_{u(t) = u_c(t)} \\
        &+ (1 - \bar{\gamma}_c) \mathbf{V}(e(t+\Delta t)) \Bigg|_{u(t) = \mathbf{0}},
    \end{aligned}
\end{equation}
where the next-step tracking error can be obtained based on the system model in~\eqref{eq:sys_dynamic}:
\begin{equation}
    \label{eq:next_step_error}
    \begin{aligned}
        e(t+\Delta t) &= x(t+\Delta t)-x_d(t+\Delta t) \\[5pt] & = x(t) + \int_{t}^{t+\Delta t} \dot{x}(\tau) d\tau - x_d(t+\Delta t) .
    \end{aligned}
\end{equation}

\section{Actor-Identifier-Critic Controller}\label{sec3}
Packet dropouts in communication links pose a significant challenge to networked control systems, undermining their controllability, especially when the system dynamics are unknown.
To control uncertain nonlinear systems in the presence of packet dropouts, an Actor-Identifier-Critic (AIC) controller is proposed based on a model-free adaptive RL. 
As illustrated in Fig.~\ref{fig:main_diag}, the proposed AIC controller consists of three interconnected components:
\begin{itemize}
    \item \textbf{Actor} generates control actions based on the system tracking error.
    \item \textbf{Identifier} estimates the system dynamics and creates a mathematical link between other components.
    \item \textbf{Critic} evaluates the performance of the control actions and guides the learning process by providing feedback.
\end{itemize}
This triple-network control framework ensures reliable performance even in the presence of sensor and actuator dropouts. The following sections provide a detailed explanation of the AIC controller structure and learning principles for each component.
\begin{figure}[t]
    \centering
    \includegraphics[width=1\linewidth]{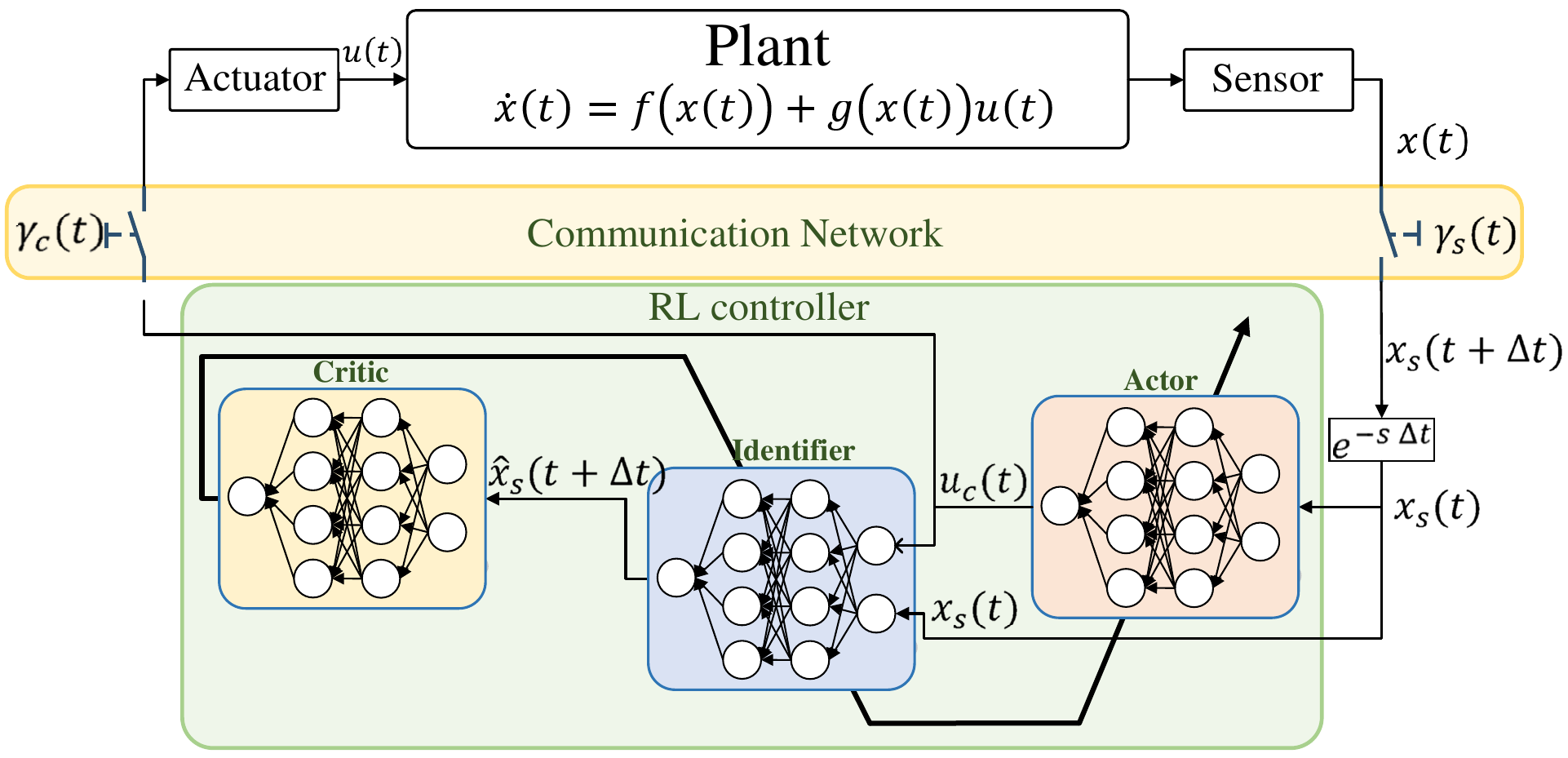}
    \caption{Schematic of the proposed AIC controller.}
    \label{fig:main_diag}
\end{figure}

\subsection{AIC Controller Components}
As shown in Fig.~\ref{fig:main_diag}, the proposed AIC controller is connected to the plant through a communication network that is subject to packet dropout. The controller consists of three interconnected components, namely actor, identifier, and critic networks. The actor receives an estimated state $x_s(t)$ and generates the control input~$u_c(t)$. The identifier predicts the next state $\hat{x}_s(t+\Delta t)$ to allow the critic estimate the value function and accordingly update the actor.

As shown in Fig.~\ref{fig:main_diag}, the proposed AIC controller is connected to the plant through a communication network subject to packet dropouts. The controller comprises three interconnected components: the actor, the identifier, and the critic networks. The actor receives the estimated state~$x_s(t)$ and generates the control command~$u_c(t)$ and the identifier predicts the next state of the system~$\hat{x}_s(t+\Delta t)$, allowing the critic to estimate the value function and thenccordingly update the actor.
The following describes the formulation of these components.

\subsubsection{Identifier}
The identifier is responsible for estimating the nonlinear system dynamics in~\eqref{eq:msystem}.
The identifier is defined as a function approximator with bounded estimation error as follows:
\begin{equation}
    \label{eq:identifier_part}
            f_c(x(t)) + g(x(t))u_c(t) = \mathcal{F}_i(x(t),  u_c(t)) + \epsilon_i,
\end{equation}
where $\mathcal{F}_i$ is the function approximator for system dynamics and $\epsilon_i \in \mathbb{R}^{n_x}$ represents the bounded estimation error.
The identifier serves as a link between actor and critic to transfer gradients. Therefore, it is crucial for the identifier to be a stable model. In this study, we use an identifier, proposed in~\cite{talebi2009neural}, which is based on a multi-layer perceptron~(MLP) with stable update rules for its parameters. 
The used identifier is described as follows:
\begin{equation}
\label{identifier_structure}
     \mathcal{F}_i(x_s(t),u_c(t)) = \hat{W}_i \sigma(\hat{V}_i\bar{x}(t)),\vspace{5px}
\end{equation}
where $\hat{W}_i$, $\hat{V}_i$ represent the estimated weights for the output and hidden layers,~$\bar{x}(t) = \left[x_s(t),u_c(t)\right]$ is the input of the identifier and,~$\sigma(\cdot)$ is a bipolar sigmoid function defined as:
\begin{equation}
    \small
    \sigma(x) = \frac{2}{1 + e^{-x}} - 1.
\end{equation}
It is worth mentioning that, according to the universal approximation theorem, any continuous nonlinear function defined on a compact set can be approximated using an NN with a single hidden layer and a sufficient number of neurons~\cite{HORNIK1989359, esfandiari2022neural}.


To solve~\eqref{eq:optimization_with_dr} and determine the optimal control policy, the next-step tracking error $e(t+\Delta t)$ is required in the state value function. However, considering that the system dynamics is unknown in~\eqref{eq:next_step_error}, the next-step system state is first estimated using the identifier network to obtain the next-step tracking error.
Substituting~\eqref{identifier_structure} in system dynamics~\eqref{eq:msystem}, the estimated state dynamics can be described as follows:
\begin{equation}
	\begin{aligned}
	\dot{x}_s(t) & \approx A_c x_s(t) + \mathcal{F}_i(x_s(t),u_c(t)).
	\end{aligned}
\end{equation}
Then, since $\hat{e}_s=x_s(t)-x_d(t)$ the estimated tracking error dynamics can be obtained as:
\begin{equation}
\label{eq:identifier_main_rule}
\begin{aligned}
    \dot{\hat{e}}_s(t) &= \dot{x}_s(t)-\dot{x}_d(t)\\
    &=A_c x_s(t) + \mathcal{F}_i(x_s(t),u_c(t)) - \dot{x}_d(t).
\end{aligned}
\end{equation}
Therefore, the estimated next-step tracking error is obtained as:
\begin{equation}
    \small
    \label{eq:identifier_output}
    \begin{aligned}
    \hat{e}_s(t + \Delta t) = x_s(t) + \int_{t}^{t+\Delta t} (A_c x_s(\tau) + \mathcal{F}_i(x_s(\tau),u_c(\tau)))d\tau\\
    - x_d(t + \Delta t)
    \end{aligned}
\end{equation}
The estimated next-step tracking error~$\hat{e}_s(t + \Delta t)$ is used both to handle sensor-to-controller dropouts in the next control step and to allow gradient propagation for updating the actor parameters (see Section~\ref{sec:actor_update}).

\subsubsection{Critic}
The critic network estimates the unknown value function in~\eqref{eq:optimization_with_dr}, based on the tracking error resulting from the control command. This value space is estimated using a weighted summation of predefined basis functions, which is defined as:
\begin{equation}
    \label{eq:critic_forward}
    \hat{\mathbf{V}}(t) = \mathcal{F}_c(\hat{e}_s(t)) = \hat{W}_c\Psi_c(\hat{e}_s(t))
\end{equation}
Here, $\hat{W}_c$ represents the critic weight matrix, and $\Psi_c$ is the vector of basis functions, which can be defined as:
\begin{equation}
\Psi_c(x)=
\left [ 
\begin{matrix}
\psi_{c_1}(\cdot) & \psi_{c_2}(\cdot) & \cdots &
\end{matrix}
\right ]^T
\end{equation}
The critic network is used to improve the control policy and obtain the update rules of the actor network.

\subsubsection{Actor}
The actor network (i.e., the control policy) generates the control command based on the tracking error. 
The control command is the output of the actor network, which is defined as:
\begin{equation}
    \label{actor}
    u_c^*(t) = \mathcal{F}_a(\hat{e}_s(t)) + \epsilon_a,
\end{equation}
where $u_c^* \in \mathbb{R}^{n_u}$ represents the control output by the actor. The function $\mathcal{F}_a$ denotes a function approximator as an MLP, and~$\epsilon_a \in \mathbb{R}^{n_u}$ is the bounded estimation error of the actor.
Therefore, the actor network is described as follows:
\begin{align}
\label{actor_mat}
    u_c^*(t) = W_a \sigma\left(V_a \hat{e}_s(t)\right) +\epsilon_a,
\end{align}
where $W_a$ and $V_a$ denote ideal weights for output and hidden layers, respectively. Since the ideal weights are unknown, an approximation of these weights is used in the actor network. Then, the structure of the actor network can be expressed as follows:
\begin{align}
\label{actor_matrix}
    \hat{u}_c(t) = \hat{W}_a \sigma\left(\hat{V}_a \hat{e}_s(t)\right)
\end{align}
here, $\hat{u}_c(k) \in \mathbb{R}^{n_u}$, $\hat{W}_a$ and $\hat{V}_a$ represent the estimated control action and weights for output and hidden layers, respectively.



\subsection{Learning Principles}
In the following, we present the learning rules for updating the AIC controller components, i.e., actor, identifier, and critic networks. These update rules ensure the convergence of the networks in the presence of stochastic sensor and actuator dropouts.
\subsubsection{Identifier Update}
The identifier network estimates the system dynamics to compensate for packet dropouts in sensor data and create a gradient link between the actor and critic updates. 
The identifier function $\mathcal{F}_i(\cdot)$ in~\eqref{identifier_structure} uses the following update rules with guaranteed stability proof as provided in~\cite{talebi2009neural}:
\begin{equation}
	\label{eq:identifier_update_1}
	\begin{aligned}
		\dot{\hat{W}}_i = & \eta_{i_1} \left(\tilde{x}(t)^T A_c^{-1} \right)^T \left(\sigma(V_i\bar{x}(t))\right)^T - \rho \|\tilde{x}(t)\|\|\hat{W}_i\|,
	\end{aligned}
\end{equation}
\begin{equation}
	\label{eq:identifier_update_2}
	\begin{aligned}
		\dot{\hat{V}}_i = & \eta_{i_2} \left(\tilde{x}(t)^T A_c^{-1} \right)^T \left(\hat{W}_i (I - \Pi_i(t))\right)^T -\rho \|\tilde{x}(t)\|\|\hat{V}_i\|. 
	\end{aligned}
\end{equation}
here, $\tilde{x}(t) = x_s(t) - \hat{x}_s(t)$ is the identifier state estimation error, $\sigma(\cdot)$ represents the sigmoid activation function, $\eta_{i_1}$ and $\eta_{i_2}$ denote the learning rates, $\rho$ is the regularization parameter, $I$ is the identity matrix with appropriate dimension, and $\Pi_i(t) = \text{diag}(\sigma^2(V^T_i \bar{x}(t)))$ is calculated based on sigmoid derivative.

Due to the presence of stochastic dropouts on the controller side, the identifier network estimates future states based on a weighted summation of both dropout and non-dropout scenarios, expressed as:
\begin{equation}
    \small
    \label{eq:identifier_update_error}
    \begin{aligned}
    	\hat{x}_s(t+\Delta t) = & \bar{\gamma}_c \hat{x}_s(t+\Delta t) \Bigg|_{u(t) = \hat{u}_c(t)} + (1 - \bar{\gamma}_c) \hat{x}_s(t+\Delta t) \Bigg|_{u(t) = \mathbf{0}}.
    \end{aligned}
\end{equation}
The estimated state is then backpropagated in~\eqref{eq:identifier_update_1} and~\eqref{eq:identifier_update_2} to update the identifier network parameters. The stochastic forward calculation in the training process is illustrated in Fig.~\ref{fig:identifier_update}.
\begin{figure}[H]
    \centering
    \includegraphics[width=1\linewidth]{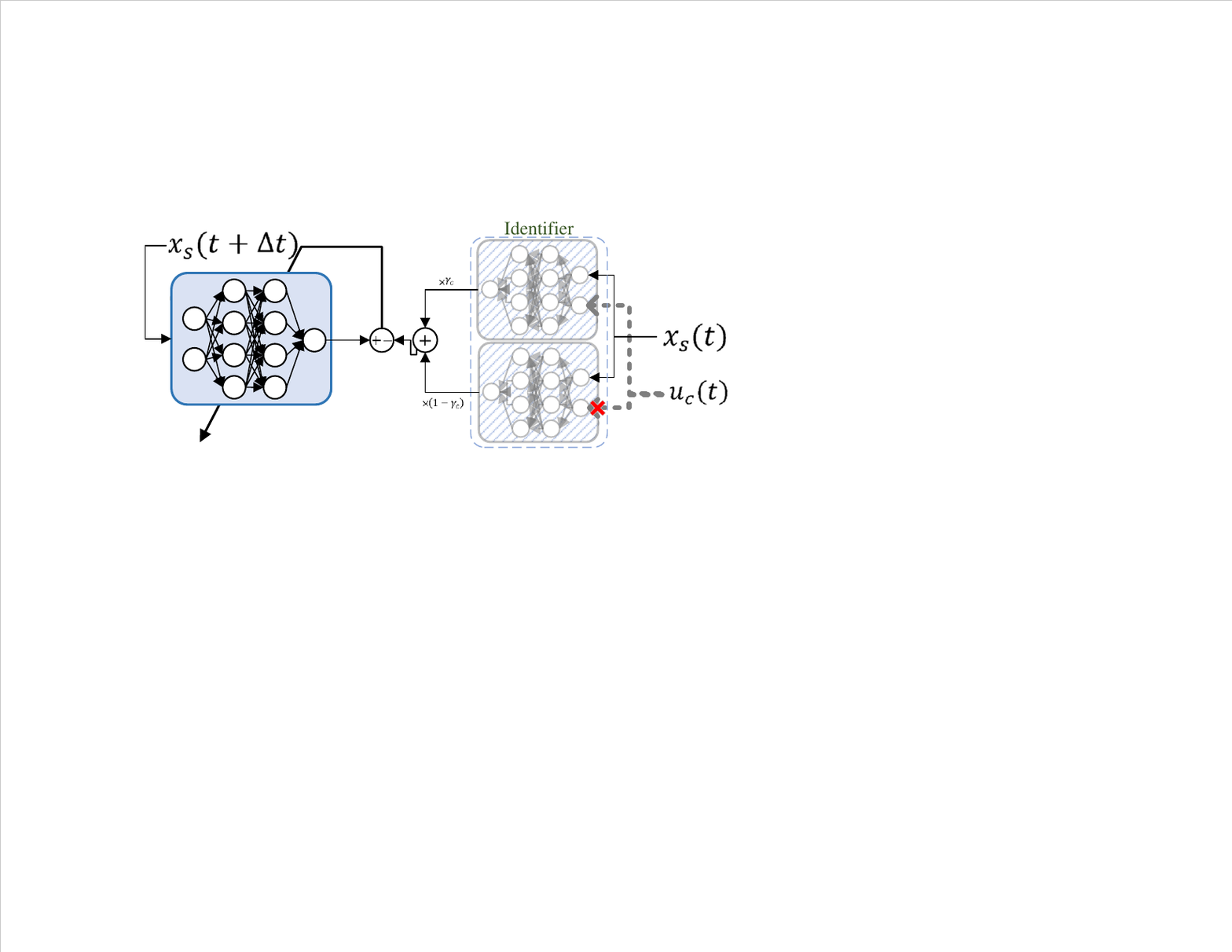}
    \caption{Stochastic forward calculation for updating identifier parameters.}
    \label{fig:identifier_update}
\end{figure}

\subsubsection{Critic Update}
To solve the nonlinear control problem described in Section~\ref{subsection:Optimization_Formulation}, we define a model-free Bernoulli Hamilton–Jacobi–Bellman~(BHJB) equation that considers packet dropouts. The BHJB is defined as follows:
\begin{equation}
\label{eq:final_HJ_track}
    \begin{aligned}
        \hat{\mathbf{V}}(t) &= \mathcal{F}_c(\hat{e}_s(t)) = \\
        &\int_t^{t+\Delta t} \big[ \hat{e}_s(\tau)^T Q \hat{e}_s(\tau) + \hat{u}_c(\tau)^T R \hat{u}_c(\tau) \big] d\tau \\ &+ \bar{\gamma}_c \mathcal{F}_c (\hat{e}_s(t+\Delta t)) \Bigg|_{u(t) = \hat{u}_c(t)} \\ 
        &+ (1-\bar{\gamma}_c) \mathcal{F}_c (\hat{e}_s(t+\Delta t)) \Bigg|_{u(t) = \mathbf{0}}. 
    \end{aligned}
\end{equation}
When calculating future value estimates, the BHJB considers the stochastic nature of the system by using a weighted summation of both scenarios with and without dropout.

Based on the proposed BHJB, the update rule for the critic network weights can be calculated as follows:
\begin{equation}
\label{eq:critic_weight_update_rule}
    \dot{\hat{W}}_c = \eta_c \hat{\mathcal{E}}_{TD} \cdot \Psi_c(\hat{e}_s(t))^T,
\end{equation}
where $\dot{\hat{W}}_c$ represents the time derivative of the critic network weights. The term $\hat{\mathcal{E}}_{TD}$ is the temporal difference error, which is obtained as:
\begin{equation}
\label{eq:td_error}
\begin{aligned}
    \hat{\mathcal{E}}_{TD} = \int_t^{t+\Delta t} \big[ \hat{e}_s(\tau)^T Q \hat{e}_s(\tau) + \hat{u}_c(\tau)^T R &\hat{u}_c(\tau) \big] d\tau \\ + \bar{\gamma}_c \mathcal{F}_c (\hat{e}_s(t+\Delta t)) &\Bigg|_{u(t) = \hat{u}_c(t)} \\ 
    + (1-\bar{\gamma}_c) \mathcal{F}_c (\hat{e}_s(t+\Delta t)) &\Bigg|_{u(t) = \mathbf{0}} \\ 
        - \mathcal{F}_c(\hat{e}_s(t))&.
\end{aligned}
\end{equation}

The critic network update process based on the BHJB is illustrated in Fig.~\ref{fig:critic_update}.
\begin{figure}[H]
    \centering
    \includegraphics[width=1\linewidth]{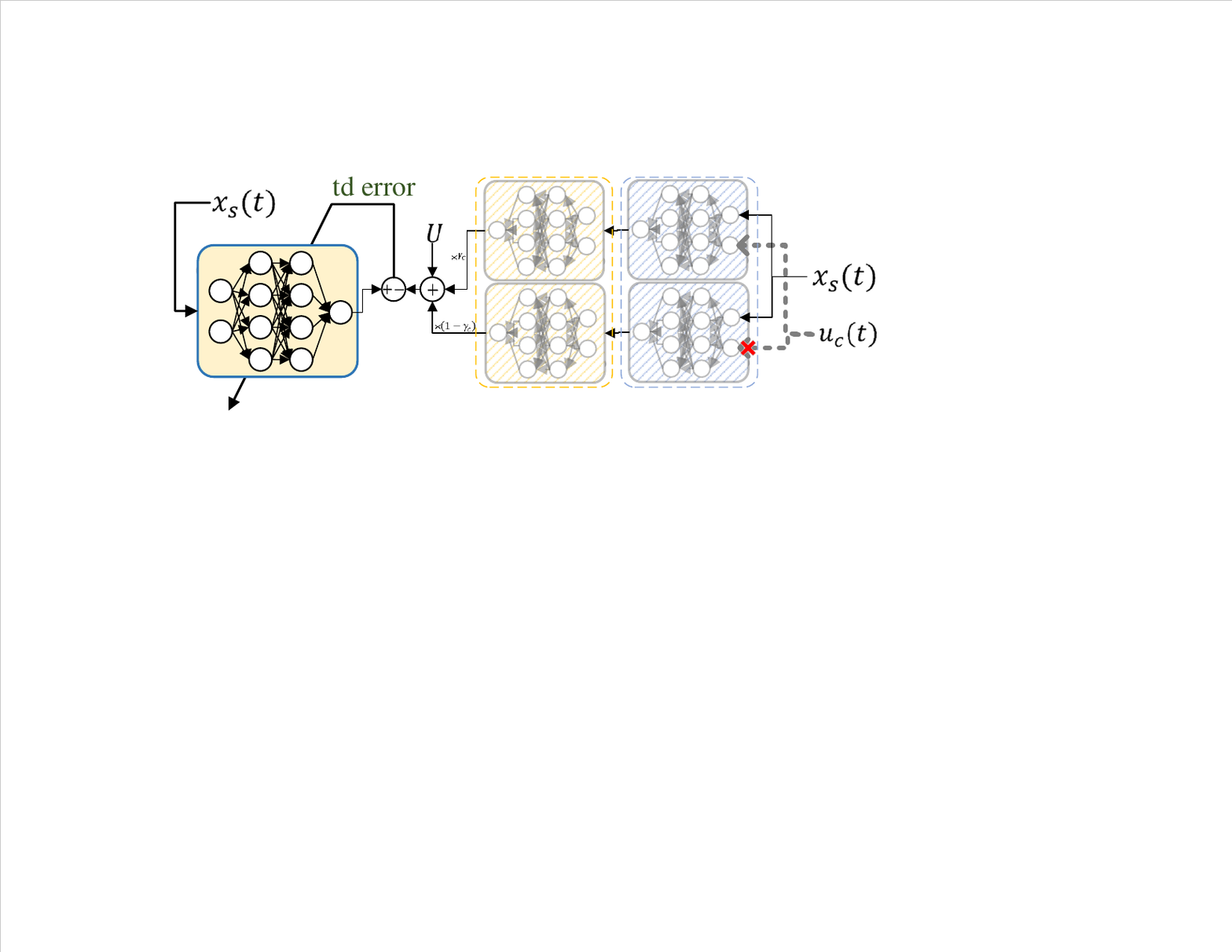}
    \caption{Stochastic forward calculation for updating critic parameters.}
    \label{fig:critic_update}
\end{figure}

\subsubsection{Actor Update}
\label{sec:actor_update}
The actor network generates control command $\hat{u}_c(t)$ based on the tracking error $\hat{e}_s(t)$. The actor objective is to minimize the estimated value $\hat{\mathbf{V}}$ and ensure that the system follows the desired trajectory. To ensure optimality of the control policy, let us first define the Hamiltonian function as follows:
\begin{equation}
	\label{eq:hamiltonian}
	\begin{aligned}
		\mathcal{H}(&e_s(t), \hat{u}_c(\cdot), \mathbf{V}(\cdot)) = \\
		& \hat{e}_s(t)^T Q \hat{e}_s(t) + \hat{u}_c(t)^T R \hat{u}_c(t) + \nabla \mathbf{V}(\hat{e}_s(t))^T \dot{\hat{e}}_s(t).
	\end{aligned}
\end{equation}
Here, $\nabla \mathbf{V}(\cdot)$ is the derivative of the value space estimate with respect to $\hat{e}_s(t)$.
The optimal value function and control policy should satisfy the HJB equation. The Hamiltonian function for the estimated optimal control policy can be described as follows:
\begin{equation}
	\label{eq:hamiltonian_optimal}
	\begin{aligned}
		\mathcal{H}(&e_s(t), u^*_c(\cdot), \mathbf{V}^*(\cdot)) = \\
		& \hat{e}_s(t)^T Q \hat{e}_s(t) + u^*_c(t)^T R u^*_c(t) \\&+ \nabla \mathbf{V}^*(\hat{e}_s(t))^T \big( A_c x_s(t) + \mathcal{F}_i(x_s(t),u^*_c) - \dot{x}_d(t) \big) = 0
	\end{aligned}
\end{equation}
Based on Bellman optimality, the optimal action can be obtained as follows~\cite{10015036}:
\begin{equation}
    \label{eq:open_hjb}
	\begin{aligned}
		u^*_c(t) &= \arg\min_{\theta_a} \big( \mathcal{H}(\hat{e}_s(t), \hat{u}_c(\cdot), \mathbf{V}^*(\cdot)) \big) \\& = \arg\min_{\theta_a} \bigg(
		\int_{t}^{t+\Delta t} (\hat{e}_s(t)^T Q \hat{e}_s(t) + \hat{u}_c(\tau)^T R \hat{u}_c(\tau))d\tau \\ & \quad \quad \quad \quad\quad \quad\quad \quad+ \mathbf{V}^*(\hat{e}_s(t+\Delta t)) 
		\bigg),
	\end{aligned}
\end{equation}
where $\theta_a \in \{ W_a, V_a \}$ is the set of actor weight matrices.
To minimize this objective, the parameters of the actor network are updated through a chain gradient process involving the critic and the identifier (see Fig.~\ref{fig:main_diag}). By calculating the derivative of~\eqref{eq:open_hjb} with respect to actor parameters, the actor update rule can be derived as:
\begin{equation}
	\dot{\theta}_a = \eta_{a} \bigg[ 2 R \hat{u}_c(t)\int_{t}^{t+\Delta t} 1d\tau 
	+ \frac{\partial \hat{\mathbf{V}}(t+\Delta t)}{\partial \hat{u}_c(t)}
	 \bigg] \frac{\partial \hat{u}_c(t)}{\partial \theta_a}.
\end{equation}
The term $\frac{\partial \hat{\mathbf{V}}(t+\Delta t)}{\partial \hat{u}_c(t)}$ can be calculated as:
\begin{equation}
	\begin{aligned}
		&\frac{\partial \hat{\mathbf{V}}(t+\Delta t)}{\partial \hat{u}_c(t)} = \\[7pt]
		&\frac{\partial \hat{\mathbf{V}}(t+\Delta t)}{\partial \hat{e}_s(t+\Delta t)} \cdot 
		\frac{\partial \hat{e}_s(t+\Delta t)}{\partial \mathcal{F}_i(x_s(t),\hat{u}_c(t))} \cdot 
		\frac{\partial \mathcal{F}_i(x_s(t), \hat{u}_c(t))}{\partial \hat{u}_c(t)}\\[7pt]
	\end{aligned}
\end{equation}
First term $\frac{\partial \hat{\mathbf{V}}(t+\Delta t)}{\partial \hat{e}_s(t+\Delta t)}$ describes the critic network gradients which is obtained as:
\begin{equation}
	\frac{\partial \hat{\mathbf{V}}(t+\Delta t)}{\partial \hat{e}_s(t+\Delta t)} = W_c \Psi_c' \left( \hat{e}_s(t+\Delta t)\right).
\end{equation}
Moreover, based on~\eqref{eq:identifier_output}, it can be concluded that:
\begin{equation}
		\frac{\partial \hat{e}_s(t+\Delta t)}{\partial \mathcal{F}_i(x_s(t),\hat{u}_c(t))} = \int_{t}^{t+\Delta t} 1d\tau = \Delta t.
\end{equation}
Using~\eqref{identifier_structure}, the gradients of the identifier network with respect to the input~$\hat{u}_c$ are represented as:
\begin{equation}
	\frac{\partial \mathcal{F}_i \left(x_s(t), \hat{u}_c(t)\right)}{\partial \hat{u}_c(t)} = \hat{W}_i \left( I - \Pi_i(t) \right)  \hat{V}_i.
\end{equation}
Lastly, based on~\eqref{actor_matrix}, the gradients of the actor network with respect to its parameters are given by:
\begin{equation}
\label{actor_update}
	\left\{
	\begin{aligned}
		\frac{\partial \hat{u}_c(t)}{\partial \hat{W}_a} &= \sigma \left( \hat{V}_a \hat{e}_s(t) \right) \\
		\frac{\partial \hat{u}_c(t)}{\partial \hat{V}_a} &= \hat{W}_a \left( I - \Pi_a \right)
	\end{aligned}
	\right.
\end{equation}
where $\left( I - \Pi_a \right)=I -\text{diag} \left( \sigma^2 \left( \hat{V}_a \hat{e}_s(t) \right) \right)$ is the derivative of the bipolar Sigmoid function with respect to its input. The final update rules for the actor network parameters are obtained as follows:
\begin{equation}
\label{eq:actor_update_1}
\small
	\begin{aligned}
		\dot{\hat{W}}_a = \eta_{a_1} \Delta t \Bigg[& \hat{W}_c \cdot \Psi_c' \left( \hat{e}_s(t + \Delta t) \right) \cdot \hat{W}_i \cdot \left( I - \Pi_i(t) \right) \cdot \hat{V}_i \\
		& + R \hat{u}_c(t) \Bigg] \cdot \sigma \left( \hat{V}_a \hat{e}_s(t) \right) 
	\end{aligned}
\end{equation}
\begin{equation}
\label{eq:actor_update_2}
\small
	\begin{aligned}
		\dot{\hat{V}}_a = \eta_{a_2} \Delta t \Bigg[&  \hat{W}_c \cdot \Psi_c' \left( \hat{e}_s(t + \Delta t) \right) \cdot \hat{W}_i \cdot \left( I - \Pi_i(t) \right) \cdot V_i \\
		& + R \hat{u}_c(t) \Bigg] \cdot \hat{W}_a \cdot \left( I - \Pi_a(t) \right) 
	\end{aligned}
\end{equation}

The pseudocode for the proposed AIC controller is outlined in Algorithm~\ref{alg:method}.

\begin{algorithm}[h]
    \caption{AIC controller for online optimal model-free tracking control}
    \label{alg:method}
    \SetKwBlock{Input}{Initialization}{end}

    \begin{varwidth}{\linewidth}
    \textbf{Input:} Initialize network parameters:
    \begin{itemize}
        \item Critic: $\hat{W}_c$
        \item Actor: $\hat{W}_a$, $\hat{V}_a$
        \item Identifier: $\hat{W}_i$, $\hat{V}_i$
    \end{itemize}
    \end{varwidth}
    \\[5pt]
        
    \For{each $x_d(t)$}{
       Output the control command~\eqref{actor_matrix}:
       \begin{equation*}
           \resizebox{0.4\hsize}{!}{$
           \hat{u}_c(t) = \hat{W}_a \sigma\left(\hat{V}_a \hat{e}_s(t)\right)$}
       \end{equation*}
        
        Estimate the future tracking error using the identifier~\eqref{eq:identifier_output}:
        \begin{equation*}
        \resizebox{1\hsize}{!}{
        $\begin{aligned}
            \hat{e}_s(t + \Delta t) = x_s(t) + \int_{t}^{t+\Delta t} (A_c x_s(\tau) + \mathcal{F}_i(x_s(\tau),u_c(\tau)))d\tau\\
            - x_d(t + \Delta t)
        \end{aligned}$}
        \end{equation*}
        
        \textbf{Update the critic} weight matrices $\hat{W}_c$ using the estimated error based on~\eqref{eq:critic_weight_update_rule} and~\eqref{eq:td_error}.\\[5pt]
        
        Estimate the system state with stochastic dropout~$\gamma_c$ based on~\eqref{eq:identifier_update_error}:
        \begin{equation*}
            \resizebox{1\hsize}{!}{$
            \hat{x}_s(t+\Delta t) = \bar{\gamma}_c \hat{x}_s(t+\Delta t) |_{u(t) = \hat{u}_c(t)} + (1 - \bar{\gamma}_c) \hat{x}_s(t+\Delta t) |_{u(t) = \mathbf{0}}$}
        \end{equation*}

        \textbf{Update the identifier} weight matrices $\hat{W}_i$ and $\hat{V}_i$ using the estimated state based on~\eqref{eq:identifier_update_1}, \eqref{eq:identifier_update_2}.\\[5pt]
        
        \textbf{Update the actor} weight matrices $\hat{W}_a$ and $\hat{V}_a$ based on~\eqref{eq:actor_update_1}, \eqref{eq:actor_update_2}.
        \\[5pt]
    }
    
\end{algorithm}

\subsection{Stability Analysis}

To analyze the stability of the proposed AIC controller, the following Lyapunov function is considered:
\begin{equation}
	L = L_a + L_i + L_c,
\end{equation}
where $L_a$, $L_i$, and $L_c$ are Lyapunov functions for the actor, identifier, and critic respectively.

First, in Theorem~\ref{theorem:id1}, the stability of the identifier network is analyzed without considering the packet dropouts.
\begin{theorem}
	\label{theorem:id1}
	For system~\eqref{eq:msystem}, if the weight matrices of the identifier network is updated by~\eqref{eq:identifier_update_1} and \eqref{eq:identifier_update_2} with $0 < \eta_i < 2$, the weight estimation error $\tilde{W}_i= W_i-\hat{W}_i$ is UUB.
\end{theorem}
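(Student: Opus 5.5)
The plan follows the Lyapunov argument of Talebi et al.~\cite{talebi2009neural} for the stable MLP identifier used in \eqref{identifier_structure}.

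\textbf{Error dynamics.} Without dropouts ($\gamma_s=\gamma_c=1$) the identifier runs as $\dot{\hat{x}}_s = A_c\hat{x}_s + \hat{W}_i\sigma(\hat{V}_i\bar{x})$. By the universal approximation property, the plant \eqref{eq:msystem} can be written as $\dot{x} = A_c x + W_i\sigma(V_i\bar{x}) + \epsilon_i$ with ideal bounded weights $W_i, V_i$ and $\|\epsilon_i\|\le\bar{\epsilon}$. Subtracting gives
\begin{equation*}
\dot{\tilde{x}} = A_c\tilde{x} + \tilde{W}_i\sigma(\hat{V}_i\bar{x}) + w(t),
\end{equation*}
where $w(t) = W_i[\sigma(V_i\bar{x})-\sigma(\hat{V}_i\bar{x})] + \epsilon_i$. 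The term $w$ is bounded because $\sigma$ is bounded (bipolar sigmoid, $|\sigma|\le 1$) and $W_i$ is bounded.

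\textbf{Static approximation of the weight laws.} As in~\cite{talebi2009neural}, I treat the laws \eqref{eq:identifier_update_1}--\eqref{eq:identifier_update_2} as gradient descent on $\tfrac12\|\tilde{x}\|^2$ under the quasi-static approximation $\tilde{x}\approx -A_c^{-1}(\tilde{W}_i\sigma+w)$. This approximation is what produces the factor $A_c^{-1}$ in the laws. With $\dot{\tilde{W}}_i=-\dot{\hat{W}}_i$, the $\hat{W}_i$ law then reads
\begin{equation*}
\dot{\tilde{W}}_i = -\eta_i (A_c^{-1})^T \tilde{x}\,\sigma^T + \rho\|\tilde{x}\|\hat{W}_i .
\end{equation*}

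\textbf{Lyapunov candidate.} I take $L_i = \tfrac12\tilde{x}^T P\tilde{x} + \tfrac12\mathrm{tr}(\tilde{W}_i^T\tilde{W}_i) + \tfrac12\mathrm{tr}(\tilde{V}_i^T\tilde{V}_i)$, with $P=P^T>0$ solving $A_c^TP+PA_c=-Q_0$. Since $A_c$ is diagonal Hurwitz, $P$ can also be taken diagonal.

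\textbf{Differentiating $L_i$.} The state part gives $-\tfrac12\lambda_{\min}(Q_0)\|\tilde{x}\|^2$ plus cross terms, each linear in $\|\tilde{x}\|$ and bounded by $\|\tilde{x}\|(c_1\|\tilde{W}_i\|+c_2)$. The weight parts give the learning-law terms plus $\rho\|\tilde{x}\|\,\mathrm{tr}(\tilde{W}_i^T\hat{W}_i)$. Writing $\hat{W}_i=W_i-\tilde{W}_i$ yields
\begin{equation*}
\mathrm{tr}(\tilde{W}_i^T\hat{W}_i)\le \|\tilde{W}_i\|W_M-\|\tilde{W}_i\|^2 ,
\end{equation*}
and the same holds for $\tilde{V}_i$ with bound $V_M$. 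The $\tilde{V}_i$ term with $(I-\Pi_i)$ is bounded by $\|I-\Pi_i\|\le1$. The condition $0<\eta_i<2$ enters when bounding the quadratic term $\eta_i^2\|\cdots\|^2$ against the linear descent term $\eta_i\|\cdots\|$. This is the usual $\eta(2-\eta)>0$ step, and I would verify it via the discrete-step analogue of the gradient law, as in the cited proof. Collecting terms gives
\begin{equation*}
\dot{L}_i\le -\|\tilde{x}\|\Big[a\|\tilde{x}\| + \rho\big(\|\tilde{W}_i\|-\tfrac{W_M}{2}\big)^2 + \rho\big(\|\tilde{V}_i\|-\tfrac{V_M}{2}\big)^2 - b\Big]
\end{equation*}
for positive constants $a$ and $b$. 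So $\dot{L}_i<0$ whenever $\|\tilde{x}\|>b/a$ or $\|\tilde{W}_i\|>W_M/2+\sqrt{b/\rho}$, and the same holds for $\tilde{V}_i$. This gives UUB of $\tilde{x}$, $\tilde{W}_i$ and $\tilde{V}_i$.

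\textbf{Main obstacle.} The hard step is reconciling the $A_c^{-1}$-weighted laws with the $P$-weighted state derivative, so that the sign-indefinite cross terms $\tilde{x}^T P\tilde{W}_i\sigma$ and the learning terms either cancel or are dominated. My first attempt would be to exploit the static approximation and a diagonal $A_c$ to choose $P$ proportional to $-A_c^{-1}$, which makes the cross terms cancel up to bounded residues. If that fails, I would fall back on the cited result~\cite{talebi2009neural} directly, checking only that its hypotheses hold here: Assumptions~\ref{as:1}--\ref{as:2}, $A_c$ Hurwitz, and the identifier inputs $\bar{x}$ bounded under Assumption~\ref{as:3}.
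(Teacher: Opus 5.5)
Your route is the one the paper relies on. The paper gives no argument of its own and defers to Talebi et al., whose proof is the Lyapunov argument you outline: a $\tfrac12\tilde x^TP\tilde x$ term plus trace terms, the bound $\mathrm{tr}(\tilde W_i^T\hat W_i)\le W_M\|\tilde W_i\|-\|\tilde W_i\|^2$ from the $\rho$-modification, and completion of squares. Several of your concrete steps, however, would not go through as written.

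First, the cancellation you propose for the ``main obstacle'' fails with the law you wrote down. With $\dot{\tilde W}_i=-\eta_{i_1}A_c^{-T}\tilde x\sigma^T+\rho\|\tilde x\|\hat W_i$, the two cross terms combine to $\tilde x^T(P-\eta_{i_1}A_c^{-1})\tilde W_i\sigma$. This vanishes only for $P=\eta_{i_1}A_c^{-1}$, which is negative definite because $A_c$ is diagonal Hurwitz; with $P=-cA_c^{-1}$ the two terms add. Your own gradient-descent derivation actually yields the opposite sign, $\dot{\hat W}_i=-\eta_{i_1}A_c^{-T}\tilde x\sigma^T-\rho\|\tilde x\|\hat W_i$, and with that sign $P=-\eta_{i_1}A_c^{-1}$ cancels the term exactly. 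But no cancellation is needed: bound both cross terms by a constant times $\|\tilde x\|\|\tilde W_i\|$ and absorb them into $-\rho\|\tilde x\|\|\tilde W_i\|^2$. This works for either sign, and it is what your collecting step already does.

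Second, $0<\eta_i<2$ cannot enter the way you describe. In continuous time, $\mathrm{tr}(\tilde W_i^T\dot{\tilde W}_i)$ is linear in $\eta_{i_1}$, so there is no $\eta_i^2$ term to balance. The hypothesis is simply unused, and a discrete-step analogue would prove a different statement.

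Third, with $\tilde V_i$ in $L_i$ you drop a term. The $\hat V_i$ law contains $\hat W_i=W_i-\tilde W_i$, so $\mathrm{tr}(\tilde V_i^T\dot{\tilde V}_i)$ has a piece of size $\eta_{i_2}\|A_c^{-1}\|\,\|\bar x\|\,\|\tilde x\|\,\|\tilde W_i\|\,\|\tilde V_i\|$. Absorbing it into the $\rho$-terms requires $\rho$ to be large relative to $\eta_{i_2}\|A_c^{-1}\|\sup\|\bar x\|$. You can avoid this entirely. With $n_h$ hidden neurons, $\|\sigma(\cdot)\|\le\sqrt{n_h}$ and $\|w\|\le 2\sqrt{n_h}W_M+\bar\epsilon$ regardless of $\hat V_i$. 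Hence $(\tilde x,\tilde W_i)$ is a closed subsystem, and $\tfrac12\tilde x^TP\tilde x+\tfrac12\mathrm{tr}(\tilde W_i^T\tilde W_i)$ alone yields the bound on $\tilde W_i$ for any $\eta_{i_1},\rho>0$.

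Fourth, $\|\epsilon_i\|\le\bar\epsilon$ holds only on a compact set of $\bar x$, so you need the plant state and input to stay bounded. Assumption~\ref{as:3} bounds only $x_d$, so this is an extra hypothesis. Talebi et al.\ assume bounded input and state; here it amounts to the closed-loop stability that Theorem~\ref{theorem:identifier} assumes explicitly.

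Finally, every term of your bound carries the factor $\|\tilde x\|$, so $\dot L_i=0$ on $\{\tilde x=0\}$ however large $\tilde W_i$ is. Your inequality therefore gives $\dot L_i\le0$ outside a compact set, i.e.\ boundedness, not the strict decrease you claim. A uniform ultimate bound on $\tilde W_i$ needs an excitation condition on $\sigma(\hat V_i\bar x)$. For example, if $f(0)=0$ and the plant rests at $x=0$ with $u=0$, then $\sigma(0)=0$ and $w=0$, so every $(\tilde x,\tilde W_i)=(0,\tilde W_i)$ is an equilibrium. The cited proof glosses over this last point as well.
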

The proof for Theorem~\ref{theorem:id1} is provided in~\cite{talebi2009neural}.
In Theorem~\ref{theorem:identifier}, the stability of the system is analyzed in the presence of stochastic dropouts.
\begin{theorem}
	\label{theorem:identifier}
	 The identifier described in (\ref{identifier_structure}) estimates the system output defined in (\ref{eq:sys_dynamic}) with a bounded error in the presence of the packet dropouts, if the closed-loop system is stable.
\end{theorem}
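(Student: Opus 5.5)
We will reduce the dropout case to the dropout-free case covered by Theorem~\ref{theorem:id1}, and show that packet dropouts only add a bounded term to the identifier error dynamics.

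\textbf{Step 1: error dynamics.} Define $\tilde{x}(t)=x(t)-\hat{x}_s(t)$. Subtract the identifier dynamics $\dot{\hat{x}}_s=A_c\hat{x}_s+\mathcal{F}_i(\cdot)$, taken in the weighted form \eqref{eq:identifier_update_error}, from \eqref{eq:msystem} with $u=\gamma_c u_c$. Using \eqref{eq:identifier_part}, this gives
\begin{equation*}
\dot{\tilde{x}}(t)=A_c\tilde{x}(t)+\tilde{W}_i\sigma(\hat{V}_i\bar{x})+w(t),
\end{equation*}
where $w(t)$ collects $\epsilon_i$, the hidden-layer approximation residual, and a dropout mismatch term. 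The dropout mismatch has two sources. From the actuator channel it is $(\gamma_c-\bar{\gamma}_c)\big(g(x)u_c\big)$, up to identifier error. From the sensor channel it is the difference between using $x_s=\gamma_s x$ or the identifier prediction instead of $x$ inside $\mathcal{F}_i$.

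\textbf{Step 2: bounding $w(t)$.} This is the main obstacle: we must show $\sup_t\|w(t)\|\le\bar{w}$. Here we invoke the hypothesis that the closed loop is stable, so $x$ and $\hat{u}_c$ remain in a compact set.
\begin{itemize}
\item Assumption~\ref{as:1} then gives $\|(\gamma_c-\bar{\gamma}_c)g(x)u_c\|\le\bar{g}\,\|u_c\|_\infty$.
\item Assumption~\ref{as:2} (Lipschitz $f$, $g$), together with boundedness of the sigmoid, bounds the sensor-side mismatch by $L\|x-\hat{x}_s\|$ plus a constant. The Lipschitz part is absorbed by choosing $A_c$ sufficiently Hurwitz, i.e.\ by requiring $L<\lambda_{\min}$ of the corresponding Lyapunov matrix.
\item The approximation errors $\epsilon_i$ and the hidden-layer residual are bounded on the compact set by the universal approximation theorem.
\end{itemize}
Since $\gamma_c,\gamma_s\in\{0,1\}$, these bounds hold pathwise, not only in expectation.

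\textbf{Step 3: Lyapunov analysis.} Take $L_i=\tfrac12\tilde{x}^TP\tilde{x}+\tfrac{1}{2\eta_{i_1}}\mathrm{tr}(\tilde{W}_i^T\tilde{W}_i)+\tfrac{1}{2\eta_{i_2}}\mathrm{tr}(\tilde{V}_i^T\tilde{V}_i)$, with $P=-A_c^{-1}$ (positive definite since $A_c$ is diagonal Hurwitz), which matches the $\tilde{x}^TA_c^{-1}$ factor appearing in \eqref{eq:identifier_update_1}--\eqref{eq:identifier_update_2}. Following the computation of \cite{talebi2009neural} that underlies Theorem~\ref{theorem:id1}, the weight update rules cancel the cross terms, and the $\rho$-modification terms yield $-\rho\|\tilde{x}\|(\|\tilde{W}\|^2-\|\tilde{W}\|\bar{W})$-type bounds. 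The only new contribution is $\tilde{x}^TPw$, bounded by $\|P\|\bar{w}\|\tilde{x}\|$.

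Completing squares gives $\dot{L}_i\le-\|\tilde{x}\|\big(c_1\|\tilde{x}\|+c_2(\|\tilde{W}\|-\bar{W}/2)^2-c_3\big)$, with $c_3$ depending on $\bar{w}$, $\bar{W}$ and $\bar{V}$. Hence $\dot{L}_i<0$ outside a compact ball. This gives UUB of $\tilde{x}$ and of the weight errors, i.e.\ the identifier tracks the system with bounded error despite dropouts, which is the claim of Theorem~\ref{theorem:identifier}.
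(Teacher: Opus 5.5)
Your route differs from the paper's. The paper applies \cite{talebi2009neural} to the two cases $\gamma_c\in\{0,1\}$ and treats the $\bar\gamma_c$-weighted prediction as a convex combination of them; it never addresses the sensor channel. Your handling of the actuator channel as an additive bounded term $(\gamma_c-\bar\gamma_c)g(x)u_c$ is essentially fine. The sensor channel, however, breaks Step~3. The adaptation laws \eqref{eq:identifier_update_1}--\eqref{eq:identifier_update_2} are driven by the measured error $x_s-\hat x_s$, not by your $\tilde x=x-\hat x_s$. When $\gamma_s=0$ they see $-\hat x_s$, or $0$ if the prediction is substituted for the lost sample. Two things in your Lyapunov computation require the error driving the update to be the same $\tilde x$ that appears in $L_i$: the cancellation (or domination) of the cross term $\tilde x^TP\tilde W_i\sigma(\hat V_i\bar x)$, and the damping $-\rho\|\tilde x\|\|\tilde W_i\|^2$. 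On dropout intervals the cross term survives, while the damping is proportional to $\|x_s-\hat x_s\|$, which can be zero while $\|\tilde x\|$ is large. So $\dot L_i\le-\|\tilde x\|\big(c_1\|\tilde x\|+c_2(\cdots)^2-c_3\big)$ fails there, and you get no decrease of $L_i$ outside a ball. Sensor dropouts corrupt the learning signal, not only the network input, so they cannot be folded into $w(t)$.

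Step~2 has a related flaw: the sensor mismatch cannot be absorbed by making $A_c$ more Hurwitz. Through the ideal approximator, the relevant Lipschitz function is $f_c=f-A_cx$, and $A_c\tilde x+f_c(x)-f_c(\hat x_s)=f(x)-f(\hat x_s)$, so $A_c$ drops out exactly. Through the estimated network, the constant is of order $\|\hat W_i\|\|\hat V_i\|$, and your leftover ``constant'' is proportional to $\|\hat W_i\|$. Neither is known to be bounded before the Lyapunov argument closes, so $\bar w$ is circular.

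Since the statement only asks for \emph{boundedness} under a closed-loop-stability hypothesis, you can bypass both issues. With $\|\sigma(\cdot)\|\le\sigma_m$, the $e$-modification gives the following for whatever error $\tilde x_m$ actually drives \eqref{eq:identifier_update_1} (reading its last term as $\rho\|\tilde x_m\|\hat W_i$):
\begin{equation*}
\frac{d}{dt}\|\hat W_i\|_F^2\le 2\|\tilde x_m\|\,\|\hat W_i\|_F\big(\eta_{i_1}\|A_c^{-1}\|\sigma_m-\rho\|\hat W_i\|_F\big).
\end{equation*}
Hence $\hat W_i$ stays bounded on every dropout path. The same argument then bounds $\hat V_i$, using boundedness of $\hat W_i$ and of $\bar x$ (the latter from the hypothesis). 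With bounded weights, $\hat x_s$ obeys a Hurwitz system driven by a bounded input and is therefore bounded. Since $x$ is bounded by hypothesis, $x-\hat x_s$ is bounded pathwise.
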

\begin{proof}
	For $\gamma_c=1$ and $\gamma_c=0$, in~\cite{talebi2009neural} it has been proved that the identifier estimates the state of the system with bounded error. For $0< \gamma_c < 1$, the estimation error can be described as a convex combination of cases~$\gamma_c=1$ and $\gamma_c=0$. Since boundedness is preserved under convex combinations of stable systems, the identifier estimates the system states with bounded error for all $0< \gamma_c < 1$.    
\end{proof}

To guarantee the stability of the critic network, we consider the following assumption:
\begin{assumption}
    \label{as:critic_terms}
    In~\eqref{eq:critic_weight_update_rule}, we assume that $\left\| \Psi_c(\hat{e}_s(t)) \right\| \leq \Psi_M$ and $\left\| W_c\right\| \leq W_M$, where $\Psi_M$ and are $W_M$ positive constants~\cite{wang2019adaptive}.
\end{assumption}
\begin{theorem}
	For the critic network in~\eqref{eq:critic_forward}, if the weight vector~$\hat{W}_c$ is updated using~\eqref{eq:td_error} with $0 < \eta_c < 2$, the estimation error $\tilde{W}_c=W_c-\hat{W}_c$ is UUB.
\end{theorem}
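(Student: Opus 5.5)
The plan is a Lyapunov argument of the usual adaptive-critic type, carried out on the discretized form of the critic update. The condition $0<\eta_c<2$ suggests exactly this: stability follows from a first-difference argument in which the normalized gain must lie in $(0,2)$. I will take
\begin{equation*}
L_c = \tfrac{1}{\eta_c}\,\mathrm{tr}\big(\tilde{W}_c\tilde{W}_c^T\big), \qquad \tilde{W}_c = W_c-\hat{W}_c ,
\end{equation*}
where $W_c$ are the ideal critic weights. These weights satisfy the BHJB \eqref{eq:final_HJ_track} exactly, up to a bounded approximation residual $\epsilon_c$. I will also assume the basis vector is normalized, i.e.\ $\Psi_c$ is replaced by $\Psi_c/(1+\Psi_c^T\Psi_c)$ if needed, so that $\eta_c\|\Psi_c\|^2<2$ reduces to the stated condition $0<\eta_c<2$.

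\textbf{Step 1: TD error in terms of the weight error.} Write the ideal relation
\begin{equation*}
W_c\Psi_c(\hat e_s(t)) = r(t) + \bar\gamma_c W_c\Psi_c^{1} + (1-\bar\gamma_c)W_c\Psi_c^{0} + \epsilon_c .
\end{equation*}
Here $r(t)$ is the integrated stage cost, and $\Psi_c^{1},\Psi_c^{0}$ are the basis functions evaluated at the identifier's next-step error with $u=\hat u_c$ and $u=0$, respectively. Subtracting this from \eqref{eq:td_error} gives
\begin{equation*}
\hat{\mathcal{E}}_{TD} = \tilde W_c\,\Delta\Psi - \epsilon_c, \qquad \Delta\Psi := \Psi_c(\hat e_s(t)) - \bar\gamma_c\Psi_c^{1} - (1-\bar\gamma_c)\Psi_c^{0} .
\end{equation*}
By Theorem~\ref{theorem:identifier}, the next-step estimates stay bounded even under dropouts. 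Together with Assumption~\ref{as:critic_terms}, this gives $\|\Delta\Psi\|\le 2\Psi_M$, and $\epsilon_c$ is bounded by some constant $\epsilon_M$.

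\textbf{Step 2: Difference of the Lyapunov function.} Use the Euler step $\tilde W_c(t+\Delta t)=\tilde W_c(t)-\eta_c\hat{\mathcal E}_{TD}\Psi_c^T$, which mirrors the sampled-time framework of \eqref{eq:x_future}. Expanding the quadratic gives
\begin{equation*}
\Delta L_c = -2\,\tilde W_c\Psi_c\,\hat{\mathcal E}_{TD} + \eta_c\|\Psi_c\|^2\hat{\mathcal E}_{TD}^2 .
\end{equation*}
Substitute Step~1 into this expression. Split the cross terms with Young's inequality and bound $\Psi_c$ by $\Psi_M$. The aim is an inequality of the form
\begin{equation*}
\Delta L_c \le -(2-\eta_c)\,\beta\,\|\tilde W_c\|^2 + c_1\|\tilde W_c\|\,\epsilon_M + c_2\,\epsilon_M^2 ,
\end{equation*}
where $\beta>0$ comes from the regression vector.

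\textbf{Step 3: Conclude UUB.} $\Delta L_c<0$ as soon as $\|\tilde W_c\|$ exceeds the positive root
\begin{equation*}
b = \frac{c_1\epsilon_M+\sqrt{c_1^2\epsilon_M^2+4(2-\eta_c)\beta c_2\epsilon_M^2}}{2(2-\eta_c)\beta} .
\end{equation*}
So $\tilde W_c$ enters and stays in a ball whose radius grows with $\epsilon_M$ and with $1/(2-\eta_c)$. By the standard Lyapunov extension, this establishes UUB.

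\textbf{Main obstacle.} The weak step is getting the negative term with $\beta>0$. The update \eqref{eq:critic_weight_update_rule} uses $\Psi_c(\hat e_s(t))$ as the regressor, not $\Delta\Psi$. The cross term is therefore $\tilde W_c\Psi_c\Delta\Psi^T\tilde W_c^T$, which is not automatically positive. I would first try to handle this by invoking Assumption~\ref{as:critic_terms} together with a persistence-of-excitation/dominance condition: $\Psi_c^T\Delta\Psi\ge\beta>0$, which holds for small $\Delta t$ when the discount-free difference is dominated by the current-time basis. If that fails, the fallback is to bound the cross term using $\|\hat W_c\|\le\|W_c\|+\|\tilde W_c\|$ and $\|W_c\|\le W_M$. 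This yields UUB only under a smallness condition on $\eta_c\Psi_M^2$, which I would state alongside $0<\eta_c<2$.
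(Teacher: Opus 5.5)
Your proposal has a genuine gap, and it sits exactly where you suspected. The Step~2 inequality $\Delta L_c\le-(2-\eta_c)\beta\|\tilde W_c\|^2+c_1\epsilon_M\|\tilde W_c\|+c_2\epsilon_M^2$ cannot be reached, and neither of your remedies reaches it. Inserting your (correct) Step~1 into your $\Delta L_c$, the part quadratic in $\tilde W_c$ is
\begin{equation*}
-2(\tilde W_c\Psi_c)(\tilde W_c\Delta\Psi)+\eta_c\|\Psi_c\|^2(\tilde W_c\Delta\Psi)^2 .
\end{equation*}
For every $\tilde W_c$ with $\tilde W_c\Psi_c=0$ this equals $\eta_c\|\Psi_c\|^2(\tilde W_c\Delta\Psi)^2\ge0$, and nonzero such $\tilde W_c$ exist as soon as $\Psi_c$ has two or more components. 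So no pointwise bound by $-\beta\|\tilde W_c\|^2$ is possible, whatever scalar condition you impose on $\Psi_c^T\Delta\Psi$. Excitation can only enter through an integral condition over a time window. The standard argument of that kind also needs the update's regressor to be the vector multiplying $\tilde W_c$ in the TD error, which is $\Delta\Psi$, not $\Psi_c(\hat e_s(t))$.

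Your justification for the dominance condition is also backwards. The cost is undiscounted, so as $\Delta t\to0$ both next-step errors tend to $\hat e_s(t)$ and $\Delta\Psi\to0$. To first order, $\Psi_c^T\Delta\Psi\approx-\tfrac{\Delta t}{2}\tfrac{d}{dt}\|\Psi_c\|^2$ along the averaged predicted flow, which has no fixed sign.

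The fallback fails too. The offending term $-2(\tilde W_c\Psi_c)(\tilde W_c\Delta\Psi)$ is quadratic in $\tilde W_c$, is bounded in absolute value only by $4\Psi_M^2\|\tilde W_c\|^2$, and carries no factor $\eta_c$. Bounding it through $\|\hat W_c\|\le W_M+\|\tilde W_c\|$ therefore just produces a positive $\|\tilde W_c\|^2$ term with nothing negative to absorb it. Making $\eta_c\Psi_M^2$ small only shrinks the other, already nonnegative, term.

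A sound result needs a different update law. One option is the normalized Bellman-residual rule $\dot{\hat W}_c=\eta_c\hat{\mathcal E}_{TD}\Delta\Psi^T/(1+\Delta\Psi^T\Delta\Psi)^2$. With $\bar\sigma=\Delta\Psi/(1+\Delta\Psi^T\Delta\Psi)$, it gives $\dot{\tilde W}_c^T=-\eta_c\bar\sigma\bar\sigma^T\tilde W_c^T+\eta_c\bar\sigma\epsilon_c/(1+\Delta\Psi^T\Delta\Psi)$. Under $\int_t^{t+T}\bar\sigma\bar\sigma^T d\tau\ge\beta I$ the nominal part is exponentially stable, and UUB follows.

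For comparison, the paper argues in continuous time with $L_c=\frac{1}{2\eta_c}\mathrm{tr}\{\tilde W_c^T\tilde W_c\}$. Its condition $0<\eta_c<2$ comes from the Young weighting $\mathcal E_{TD}\,\tilde W_c\Psi_c\le\frac{1}{2\eta_c}\mathcal E_{TD}^2+\frac{\eta_c}{2}(\tilde W_c\Psi_c)^2$, not from a difference step. Your Euler step (with $\Delta t$ absorbed into $\eta_c$) and your normalization of $\Psi_c$ therefore analyze a modified version of \eqref{eq:critic_weight_update_rule}.

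More importantly, the paper gets its negative term by writing $\hat{\mathcal E}_{TD}=\mathcal E_{TD}-\tilde W_c\Psi_c(\hat e_s(t))$. That is, it treats the bootstrapped terms $\hat W_c\Psi_c^{1}$ and $\hat W_c\Psi_c^{0}$ as part of an $\mathcal E_{TD}$ declared bounded via Assumption~\ref{as:critic_terms}. Your Step~1 shows this is not an identity for the update as written. For it to hold, $\mathcal E_{TD}$ would have to contain $\hat W_c$, so bounding it already presupposes a bound on $\hat W_c$.

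The paper then passes from $\|\tilde W_c^T\Psi_c\|$ to $\|\tilde W_c\|$ via $\|\tilde W_c^T\Psi_c\|\le\psi\|\tilde W_c\|$. Concluding $\dot L_c<0$ outside a ball requires the reverse inequality, which is precisely the excitation property your $\beta$ was meant to supply. So the obstacle you found is real. The paper's argument steps over it rather than resolving it, and following the paper will not close your gap.
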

\begin{proof}
	 This proof is based on the Lyapunov stability theorem. In this regard, the Lyapunov candidate is defined as:
	\begin{equation}
		L_c(t) = \frac{1}{2 \eta_c} \text{tr} \{ \tilde{W}_c^T \tilde{W}_c \}.
	\end{equation}
	where $\tilde{W}_c = W_c - \hat{W}_c$ is the estimation error for critic weights. The derivative of the Lyapunov candidate can be calculated as follows:
	\begin{equation}
		\dot{L}_c(t) = \eta_c^{-1} \tilde{W}_c \dot{\tilde{W}}_c^T.
	\end{equation}
	From \eqref{eq:critic_weight_update_rule} it can be concluded that:
	\begin{equation}
		\label{eq:Cup2two}
		\begin{aligned}
			\dot{\tilde{W}}_c &= -\dot{\hat{W}}_c = \eta_c \hat{\mathcal{E}}_{TD} \Psi_c(\hat{e}_s(t)) \\ &= \eta_c (\mathcal{E}_{TD} - \tilde{W}_c \Psi_c(\hat{e}_s(t))) \Psi_c(\hat{e}_s(t))^T
		\end{aligned}
	\end{equation}
	According to~\eqref{eq:Cup2two}, $\dot{L}_c(t)$ is obtained as follows:
	\begin{equation}
		\label{eq:eq1}
		\begin{aligned}
			\dot{L}_c(t) = \eta_c^{-1} \left( \eta_c \mathcal{E}_{TD}(t) \tilde{W}_c \Psi(\hat{e}_s(t)) - \eta_c (\tilde{W}_c \Psi(\hat{e}_s(t)))^2 \right)
		\end{aligned}
	\end{equation}
	By applying the Cauchy--Schwarz inequality, the following can be concluded:
	\begin{equation}
		\begin{aligned}
			\dot{L}_c(t) \leq & \frac{1}{2\eta_c} \left( \mathcal{E}_{TD}^2(t) +\|\eta_c\tilde{W}_c^T \Psi(\hat{e}_s(t))\|^2 \right) \\ & - (\tilde{W}_c^T \Psi(\hat{e}_s(t)))^2 \\ \leq& -\frac{\left( 2 - \eta_c \right)}{2} \|\tilde{W}_c^T \Psi_c(\hat{e}_s(t))\|^2 +  \frac{1}{2\eta_c} \mathcal{E}_{TD}^2.
		\end{aligned}
	\end{equation}
    Based on~\eqref{eq:td_error} and by considering Assumption~\ref{as:critic_terms}, the term $\mathcal{E}_{TD}$ is bounded.
    Using the dense property of real numbers, there exists a positive constant $\psi \in (0, \Psi_M]$ to satisfy the following inequality:
    \begin{equation}
		\|\tilde{W}_c^T \Psi(\hat{e}_s(t))\|^2 \leq \|\tilde{W}_c \|^2 \psi^2 \leq \frac{\mathcal{E}_M^2}{\eta_c(2 - \eta_c)},
	\end{equation}
    where $\mathcal{E}_M$ is the upper bound for $\mathcal{E}_{TD}$ and $\Psi_M$ is the upper bound for $\Psi(\hat{e}_s(t))$.
	It can be concluded that $\dot{L}_c(t) < 0$ if: 
	\begin{equation}
		0 < \eta_c < 2, \quad \|\tilde{W}_c^T\| \psi \geq \sqrt{\frac{\mathcal{E}_{TD}^2}{\eta_c(2 - \eta_c)}}.
	\end{equation}
    Therefore, $\tilde{W}_c$ converges to the following compact set:
	\begin{equation}
		\Omega_{\tilde{W}_c} = \left\{\tilde{W}_c: \|\tilde{W}_c\| \leq \frac{\mathcal{E}_M}{\psi \sqrt{\eta_c(2 - \eta_c)}} \right\},
	\end{equation}
    where $\eta_c$ is the learning rate of the critic network, as defined  in~\eqref{eq:critic_weight_update_rule}.
\end{proof}

Lastly, the actor stability is proven as follows:
\begin{theorem}
	For system~\eqref{eq:msystem}, if the weight vector of the critic network $\hat{W}_c$ is updated by~\eqref{eq:critic_weight_update_rule}, the identifier weights are updated by~\eqref{eq:identifier_update_1} and~\eqref{eq:identifier_update_2}, and the approximate optimal control policy is derived from~\eqref{eq:actor_update_1} and~\eqref{eq:actor_update_2}, then, under the control policy $\hat{u}_c(t)$, the state tracking error $e_s(t)$ of the closed-loop system is UUB.
\end{theorem}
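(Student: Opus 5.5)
We build a composite Lyapunov function on top of the two results already established: the identifier theorems (Theorem~\ref{theorem:id1} and Theorem~\ref{theorem:identifier}) and the critic UUB theorem. Concretely we take
\begin{equation*}
L = L_a + L_i + L_c,\qquad L_a = \tfrac12 \hat{e}_s^T \hat{e}_s + \tfrac{1}{2\eta_{a_1}}\mathrm{tr}\{\tilde{W}_a^T\tilde{W}_a\} + \tfrac{1}{2\eta_{a_2}}\mathrm{tr}\{\tilde{V}_a^T\tilde{V}_a\},
\end{equation*}
where $\tilde{W}_a = W_a-\hat{W}_a$ and $\tilde{V}_a = V_a-\hat{V}_a$. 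Here $L_i$ and $L_c$ are the identifier Lyapunov function from~\cite{talebi2009neural} and the critic one used above. Both already satisfy inequalities of the form $\dot L_{i,c} \le -k_{i,c}\|\cdot\|^2 + b_{i,c}$, so the work reduces to bounding $\dot L_a$ and showing that the cross terms can be absorbed.

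\textbf{Error dynamics.} We first write the tracking-error dynamics. From \eqref{eq:identifier_main_rule}, and using \eqref{eq:identifier_part} with $\hat{u}_c$ applied with probability $\bar\gamma_c$ and $0$ otherwise, we get
\begin{equation*}
\dot{\hat e}_s = A_c \hat e_s + A_c x_d - \dot x_d + f_c(x_s) + \bar\gamma_c g(x_s)\hat u_c + \delta,
\end{equation*}
where $\delta$ collects $\epsilon_i$ and the identifier state error $\tilde{x}$. By Theorem~\ref{theorem:identifier}, $\delta$ is bounded. Next we substitute $\hat u_c = u_c^* - \tilde{W}_a\sigma(\hat V_a\hat e_s) - W_a[\sigma(V_a\hat e_s)-\sigma(\hat V_a \hat e_s)] - \epsilon_a$. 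The bipolar sigmoid satisfies $\|\sigma\|\le\sqrt{n}$ and is Lipschitz, and $\|g\|\le \bar g$ (Assumption~\ref{as:1}). Together with Assumption~\ref{as:3} and the Lipschitz property of $f_c$ (Assumption~\ref{as:2}), this gives
\begin{equation*}
\hat e_s^T\dot{\hat e}_s \le -\lambda_{\min}(-A_c)\|\hat e_s\|^2 + \|\hat e_s\|\big(c_1\|\tilde W_a\| + c_2\|\tilde V_a\| + c_3\big).
\end{equation*}
We handle the part of $f_c(x_s)$ evaluated near $x_d$ as bounded and treat the remainder as $\ell\|\hat e_s\|$. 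This remainder must be dominated by $A_c$, so we will impose $\lambda_{\min}(-A_c) > \ell + \bar\gamma_c\bar g\,c$, a design condition on $A_c$.

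\textbf{Weight-error terms.} We write the actor updates \eqref{eq:actor_update_1}--\eqref{eq:actor_update_2} as $\dot{\tilde W}_a = -\dot{\hat W}_a$. The bracket contains $\hat W_c\Psi_c'\hat W_i(I-\Pi_i)\hat V_i$, and each factor is bounded: $\hat W_c$ by the critic UUB set $\Omega_{\tilde W_c}$ together with Assumption~\ref{as:critic_terms}, $\hat W_i,\hat V_i$ by Theorem~\ref{theorem:id1}, and $I-\Pi_i$ lies in $[0,1]$. We use $R\hat u_c = R u_c^* - R\tilde W_a\sigma(\cdot)+\dots$ to extract a negative-definite term of the form $-\Delta t\,\lambda_{\min}(R)\,\mathrm{tr}\{\tilde W_a^T\tilde W_a \sigma\sigma^T\}$. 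This requires rewriting the gradient step as descent on the Hamiltonian \eqref{eq:hamiltonian_optimal} evaluated at the optimum, where $u_c^*$ makes the bracket vanish up to $\epsilon_a$ and the critic/identifier errors. The $\tilde V_a$ part gives no clean negative term, so we add a leakage-type argument or use the boundedness of $\hat W_a(I-\Pi_a)$ to bound it by $c\|\tilde V_a\|$. Applying Young's inequality to all cross terms then gives
\begin{equation*}
\dot L \le -\alpha_1\|\hat e_s\|^2 - \alpha_2\|\tilde W_a\|^2 - \alpha_3\|\tilde W_c^T\Psi_c\|^2 - \alpha_4\|\tilde x\|^2 + \beta.
\end{equation*}
Hence $\dot L<0$ whenever $\|\hat e_s\| > \sqrt{\beta/\alpha_1}$ (or any other error is large), and the standard Lyapunov extension yields UUB of $e_s$. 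Under dropout we take expectations with respect to $\gamma_c,\gamma_s$, which replaces $\gamma_c$ by $\bar\gamma_c$; the sensor dropout is covered because $x_s$ is replaced by the identifier estimate with bounded error.

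\textbf{Main obstacle.} The hardest step is making the actor weight errors produce a negative-definite term. The update is a gradient of the critic-through-identifier surrogate, not of a quantity directly expressed in $\tilde W_a$. The first thing to try is linearizing the bracket around $u_c^*$: write the bracket as $R\,\tilde u + \Delta t^{-1}\nabla_u(\hat V - V^*)$, with $\tilde u$ linear in $\tilde W_a$ up to bounded terms, so that positive-definiteness of $R$ supplies the negative term. If that is insufficient for $\tilde V_a$, we fall back to treating $\hat V_a$ as bounded by projection and establish only UUB of $\hat e_s$ and $\tilde W_a$.
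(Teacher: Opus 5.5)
\textbf{The decrease in $\hat e_s$ has no valid source.} Your term $-\lambda_{\min}(-A_c)\|\hat e_s\|^2$ is illusory. Since $f_c(x)=f(x)-A_cx$, one has $A_c\hat e_s+A_cx_d+f_c(x_s)=f(x_s)$ identically, so $A_c$ drops out of $\dot{\hat e}_s$ altogether. Correspondingly, $\ell=\mathrm{Lip}(f_c)\ge\lambda_{\max}(-A_c)-\mathrm{Lip}(f)$ grows with $A_c$. Your ``design condition'' $\lambda_{\min}(-A_c)>\ell+\bar\gamma_c\bar g c$ therefore cannot be met by choosing $A_c$. For $f(x)=Fx$ and $A_c=-aI$ it requires $a>\|F+aI\|$, which is possible only if $F+F^T\prec 0$, i.e.\ only if the open-loop drift is already contracting. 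That excludes exactly the plants the theorem is meant for; the VSM case study, for instance, is unstable without control. At the same time you absorbed $\bar\gamma_c g\,u_c^*$ into the bounded constant $c_3$, so nowhere does your argument use that the control stabilizes anything.

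The missing idea is to let the value function carry the error part of the Lyapunov function. The paper takes $L_a=\mathbf V(\hat e_s)+\frac{1}{2\eta_a}\mathrm{tr}\{\tilde W_a^T\tilde W_a\}$ and uses the Hamiltonian balance \eqref{eq:hamiltonian_balance} to replace $\nabla\mathbf V^T\dot{\hat e}_s$ by $-\hat e_s^TQ\hat e_s-\hat u_c^TR\hat u_c$. That is where $-\lambda_{\min}(Q)\|\hat e_s\|^2$ comes from. With $\tfrac12\hat e_s^T\hat e_s$ you would instead need an extra, unstated hypothesis: that $u_c^*$, applied with probability $\bar\gamma_c$, is quadratically stabilizing for the error.

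\textbf{The $\tilde W_a$ step also fails as stated, and the paper's route does not need it.} Write $\sigma:=\sigma(\hat V_a\hat e_s)$. The term you extract, $-\Delta t\,\lambda_{\min}(R)\,\mathrm{tr}\{\tilde W_a^T\tilde W_a\sigma\sigma^T\}=-\Delta t\,\lambda_{\min}(R)\|\tilde W_a\sigma\|^2$, is rank one in $\sigma$. You need $-\alpha_2\|\tilde W_a\|^2$ to absorb the cross terms linear in $\|\tilde W_a\|$. Getting it requires persistent excitation of $\sigma$ or a leakage term, and \eqref{eq:actor_update_1} contains neither. Likewise, adding projection on $\hat V_a$ changes the algorithm the theorem is about.

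There is also a sign problem: \eqref{eq:actor_update_1} is printed with $+\eta_{a_1}$. Since $\hat u_c=W_a\sigma-\tilde W_a\sigma$ exactly and $\dot{\tilde W}_a=-\dot{\hat W}_a$, the $R\hat u_c$ part yields $+\Delta t\,\sigma^T\tilde W_a^TR\tilde W_a\sigma$. Your negative term appears only if the rule is read as gradient descent.

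The paper never extracts negativity in the weight errors. It bounds all residual terms by a single quantity $C$, using $\|\hat u_c\|\le\|\hat W_a\|$ and, per Remark~1, a large a priori bound on $\hat W_a$. This gives $\dot L_a\le-\lambda_{\min}(Q)\|\hat e_s\|^2-\lambda_{\min}(R)\|\hat u_c\|^2+C$, hence $\|\hat e_s\|\le\sqrt{C/\lambda_{\min}(Q)}$. That is UUB of the tracking error alone, which is all the theorem claims.
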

\begin{proof}
    Let us use a Lyapunov function candidate as follows:
	\begin{equation}
		\begin{aligned}
			L_a = & \int_{t}^{\infty} \left( \hat{e}_s(\tau)^T Q \hat{e}_s(\tau) + \hat{u}_c(\tau)^T R \hat{u}_c(\tau) \right) d\tau \\ 
			& + \frac{1}{2 \eta_a} \text{tr} \left\{ \tilde{W}_a^T \tilde{W}_a \right\}.
		\end{aligned}
	\end{equation}
    Considering Theorems~\ref{theorem:id1} and~\ref{theorem:identifier}, it can be concluded that:
    \begin{equation}
        L_a = \mathbf{V}(\hat{e}_s(t)) + \frac{1}{2 \eta_a} \text{tr} \left\{ \tilde{W}_a^T \tilde{W}_a \right\}.
    \end{equation}
	where $\mathbf{V}(\cdot)$ is the value function as defined in~\eqref{eq:value_function}. Taking the time derivative of $L_a$ yields:
	\begin{equation}
		\begin{aligned}
			\dot{L}_a &= \frac{\partial \mathbf{V}(\hat{e}_s(t))}{\partial \hat{e}_s(t)} \dot{\hat{e}}_s(t) + \frac{1}{\eta_a} \tilde{W}_a \dot{\tilde{W}}_a^T.
		\end{aligned}
	\end{equation}
	Substituting the actor weight update rules into $\dot{L}_a$ gives:
	\begin{equation}
    \label{eq:lyapanov_der}
		\begin{aligned}
			\dot{L}_a = & \frac{\partial \mathbf{V}(\hat{e}_s(t))}{\partial \hat{e}_s(t)} \dot{\hat{e}}_s(t) \\
			& - \frac{1}{\eta_a} \tilde{W}_a \Big( \Big[ \eta_a \Delta t \hat{W}_c \nabla \Psi_c(\hat{e}_s(t+\Delta t)) \hat{W}_i (I - \Pi_i(t)) \hat{V}_i \\ 
			& \quad + R \hat{u}_c(t) \Big] \sigma(V_a \hat{e}_s(t))\Big)^T. 
		\end{aligned}
	\end{equation}	
	The control update rules are derived by minimizing the Hamiltonian~\eqref{eq:hamiltonian}. Hence, the following holds:
	\begin{equation}
		\label{eq:hamiltonian_balance}
		0 = \hat{e}_s(t)^T Q \hat{e}_s(t) + \hat{u}_c(t)^T R \hat{u}_c(t) + \frac{\partial \mathbf{V}(\hat{e}_s(t))}{\partial \hat{e}_s(t)} \dot{\hat{e}}_s(t).
	\end{equation}
	Integrating~\eqref{eq:hamiltonian_balance} into the Lyapunov derivative in~\eqref{eq:lyapanov_der} gives:
	\begin{equation}
    \label{eq:l_deravative_f}
		\begin{aligned}
			\dot{L}_a = & -\hat{e}_s(t)^T Q \hat{e}_s(t) - \hat{u}_c(t)^T R \hat{u}_c(t) \\
			& - \tilde{W}_a \Big( \Big[ \Delta t \hat{W}_c \nabla \Psi_c(\hat{\hat{e}}_s(t+\Delta t)) \hat{W}_i (I - \Pi_i(t)) \hat{V}_i \\
			& + \frac{1}{\eta_a} R \hat{u}_c(t) \Big] \sigma(V_a \hat{e}_s(t)) \Big)^T.
		\end{aligned}
	\end{equation}
	Based on~\eqref{actor_matrix}, it can be concluded that $\|\hat{u}_c \| \leq \|\hat{W}_a\|$. Then, the residual terms in $\dot{L}_a$ is bounded by $C$ which is defined as:
	\begin{equation}
        \label{eq:constant_C}
		\begin{aligned}
			C = & \Delta t \| \nabla \Psi_c(\hat{e}_s(t+\Delta t))\| \|\tilde{W}_a\| \|\hat{W}_c\| \|\hat{W}_i\| \|\hat{V}_i\| \\ + & \frac{1}{\eta_a} \lambda_{\min}(R) \sigma_m^2 \|\hat{W}_a\| \|\tilde{W}_a\|.
		\end{aligned}
	\end{equation}	
    Substituting~\eqref{eq:constant_C} in~\eqref{eq:l_deravative_f} will result in:
	\begin{equation}
		\dot{L}_a \leq -\lambda_{\min}(Q) \|\hat{e}_s(t)\|^2 - \lambda_{\min}(R) \|\hat{u}_c(t)\|^2 + C.
	\end{equation}
    If $\|\hat{e}_s(t)\| \geq \sqrt{\frac{C}{\lambda_{\min}(Q)}}$,
    then it can be concluded that $\dot{L}_a \leq 0$.
	Therefore, the closed-loop system is UUB, with the bounds on error dynamics as follows:
    \begin{equation}
        \|\hat{e}_s(t)\| \leq \sqrt{\frac{C}{\lambda_{\min}(Q)}}
    \end{equation}
\end{proof}

\textbf{Remark 1}: $\Delta t$ and $\eta_a$ are design parameters that can be used to control the upper bound for tracking error $e_s(t)$.
It is worth mentioning that for cases where an upper bound for $\hat{W}_a$ is not known, the upper bound for $C$ can be obtained by choosing a large number as the upper bound for $\hat{W}_a$. Then, by choosing appropriate design parameters ($\Delta t$ and $\eta_a$), the tracking error remains within the desired range.

\section{Results and Discussion}\label{sec4}
The effectiveness of the proposed controller is validated using two benchmark nonlinear systems: an affine nonlinear single-input, multi-output (SIMO) system~\cite{10387713,7867059} and a nonlinear multi-input, multi-output (MIMO) system~\cite{wang2019adaptive}.
In addition, a case study on power system stability~\cite{10016258} is presented to further demonstrate the practical applicability of the controller.



\subsection{Nonlinear SIMO System}\label{sec:scenario_a}

Consider the following nonlinear SIMO system with two state variables and an affine control input:
\begin{equation}
	\dot{x} = f(x) + g(x)u,
\end{equation}
where $f$ and $g$ are system dynamics matrices, defined as~\cite{7867059}:
\begin{align}
f(x) &= \begin{bmatrix}
	-x_1 + x_2 \\
	-0.5x_1 - 0.5x_2(1 - (\cos(2x_1) + 2)^2)
\end{bmatrix}\\[5pt]
g(x) &= \begin{bmatrix}
	0 \\
	\cos(2x_1) + 2
\end{bmatrix}.
\end{align}
The desired state trajectory for the simulations is given by:
\begin{equation}
x_d=\begin{bmatrix}
	sin(t) \\
	cos(t)+sin(t)
\end{bmatrix}.
\end{equation}

The following quadratic function is used as the value function:
\begin{equation}
    \label{eq:ex1-value}
	\mathbf{V}(e(t), \hat{u}_c(t)) = \int_t^{\infty} \frac{1}{2} e_1^2(\tau) + e_2^2(\tau) + 5\times 10^{-4}\hat{u}^{2}_c(\tau) d\tau
\end{equation}
where $e_1$ and $e_2$ are the state tracking errors and $\hat{u}_c$ is the control command.

The critic network is implemented as a single-layer NN consisting of $3$ neurons with the following activation functions:
\begin{equation}
\label{eq:activations}
\Psi_c(e)=
\left [ 
\begin{matrix}
e_1^2 & e_1e_2 & e_2^2 \\
\end{matrix}
\right ]^T.
\end{equation}


The proposed method is simulated for 20 seconds using the parameters in Table~\ref{table:ex1param}. The actor network uses a larger value for the learning rate to ensure fast adaptation and convergence towards the desired trajectory.

\begin{table}[h]
    \centering
    \caption{Simulation parameters for SIMO system.}
    \vspace{-5pt}
    \label{table:ex1param}
    \renewcommand{\arraystretch}{1.2}
    \resizebox{0.9\linewidth}{!}
    {\begin{tabular}{lccc}
        \toprule
        Parameter & Actor & Identifier & Critic\\
        \midrule
        Number of neurons & $8$ & $8$ & $3$ \\
        Activation & Sigmoid & Sigmoid & $\Psi_c$ \\
        Learning rate & $10^2$ & $10^{-3}$ & $10^{-9}$ \\
        Dropout rate & $\gamma_c = 0.7$ & $\gamma_p = 0.8$ & $-$ \\
        \bottomrule
    \end{tabular}}
\end{table}


The system states are represented in Fig.~\ref{fig:states_nonlinear1} and Fig.\ref{fig:states_nonlinear2}. 
It can be concluded that the states follow the desired trajectory without prior knowledge of the system dynamics. 
Figs.~\ref{fig:states_nonlinear3} and \ref{fig:UUB_ex1} show the convergence of the tracking error in 1D and 2D spaces, respectively.
It can be seen that the tracking errors converge to zero and remain within a bounded range. 
The control command remains bounded as shown in Fig.~\ref{fig:states_nonlinear4}. 
Fig.~\ref{fig:states_nonlinear5} illustrates the packet dropout events.
The blue lines represent binary values, which are zero in the absence of a packet dropout and one otherwise. 
\begin{figure}[t]
    \centering
    \begin{subfigure}[b]{1\linewidth}
        \centering
        \includegraphics[width=1\textwidth]{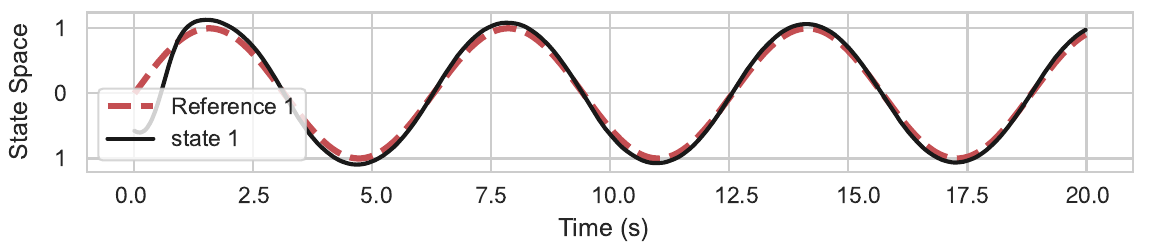}
        \vspace{-18px}
        \caption{First system state.}
        \label{fig:states_nonlinear1}
        \vspace{2px}
    \end{subfigure}
    \hfill
    \begin{subfigure}[b]{1\linewidth}
        \centering
        \includegraphics[width=1\textwidth]{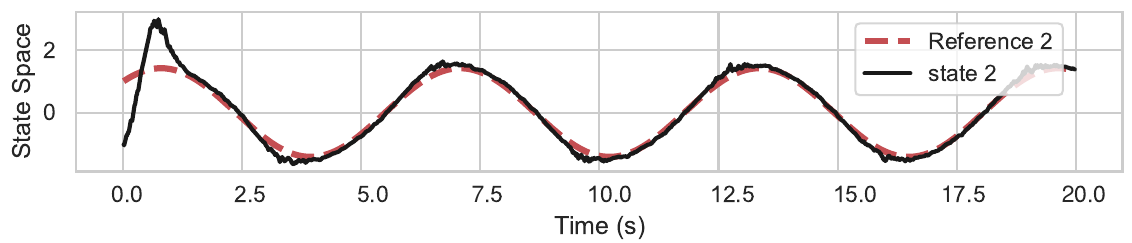}
        \vspace{-18px}
        \caption{Second system state.}
        \label{fig:states_nonlinear2}
        \vspace{2px}
    \end{subfigure}
    \hfill
    \begin{subfigure}[b]{1\linewidth}
        \centering
        \includegraphics[width=1\textwidth]{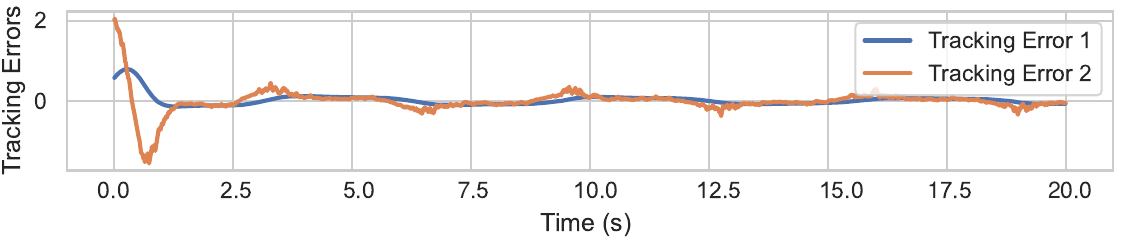}
        \vspace{-18px}
        \caption{State tracking errors.}
        \label{fig:states_nonlinear3}
        \vspace{2px}
    \end{subfigure}
    \hfill
    \begin{subfigure}[b]{1\linewidth}
        \centering
        \includegraphics[width=1\textwidth]{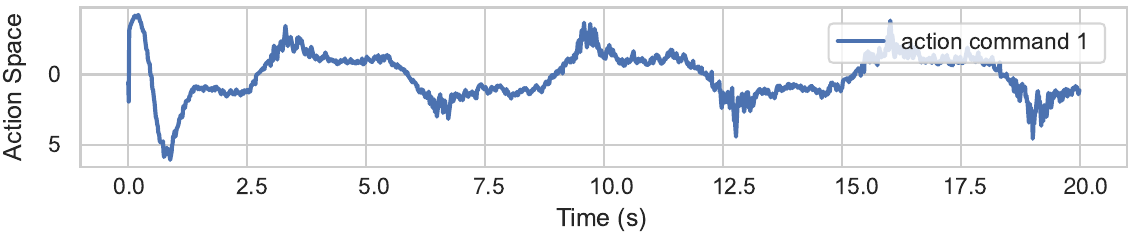}
        \vspace{-18px}
        \caption{Control command.}
        \label{fig:states_nonlinear4}
        \vspace{2px}
    \end{subfigure}
    \hfill
    \begin{subfigure}[b]{1\linewidth}
        \centering
        \includegraphics[width=1\textwidth]{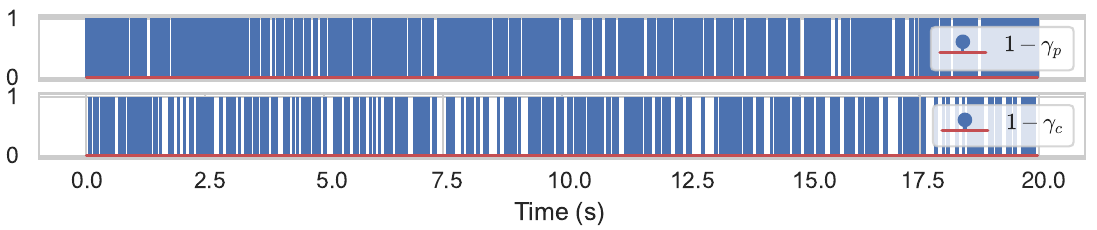}
        \vspace{-18px}
        \caption{Packet dropout events.}
        \label{fig:states_nonlinear5}
    \end{subfigure}
    \vspace{-15px}
    \caption{Tracking control results for SIMO system.}
    \label{fig:states_nonlinear}
    \vspace{-5px}
\end{figure}

\begin{figure}[h]
    \centering
    \includegraphics[width=.95\linewidth]{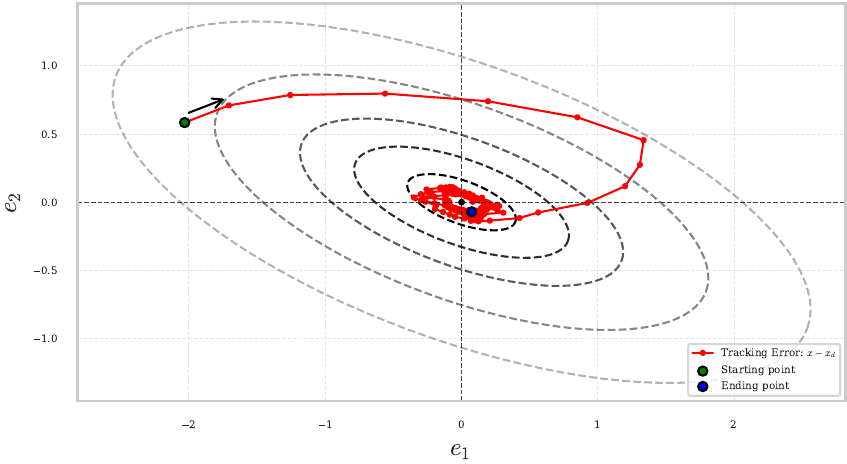}
    \caption{SIMO system tracking error in 2D space.}
    \label{fig:UUB_ex1}
\end{figure}

Fig.~\ref{fig:Results_nonlinear} shows the training error for all three networks, i.e., the actor, identifier, and critic.
The error values converge quickly to a bounded range around zero in the presence of packet dropouts. As illustrated in Fig.~\ref{fig:Results_nonlinear4}, the gradients of the actor parameters converge and become stable in the control process.
\begin{figure}[t]
    \centering
    \begin{subfigure}[b]{0.95\linewidth}
        \centering
        \includegraphics[width=1\textwidth]{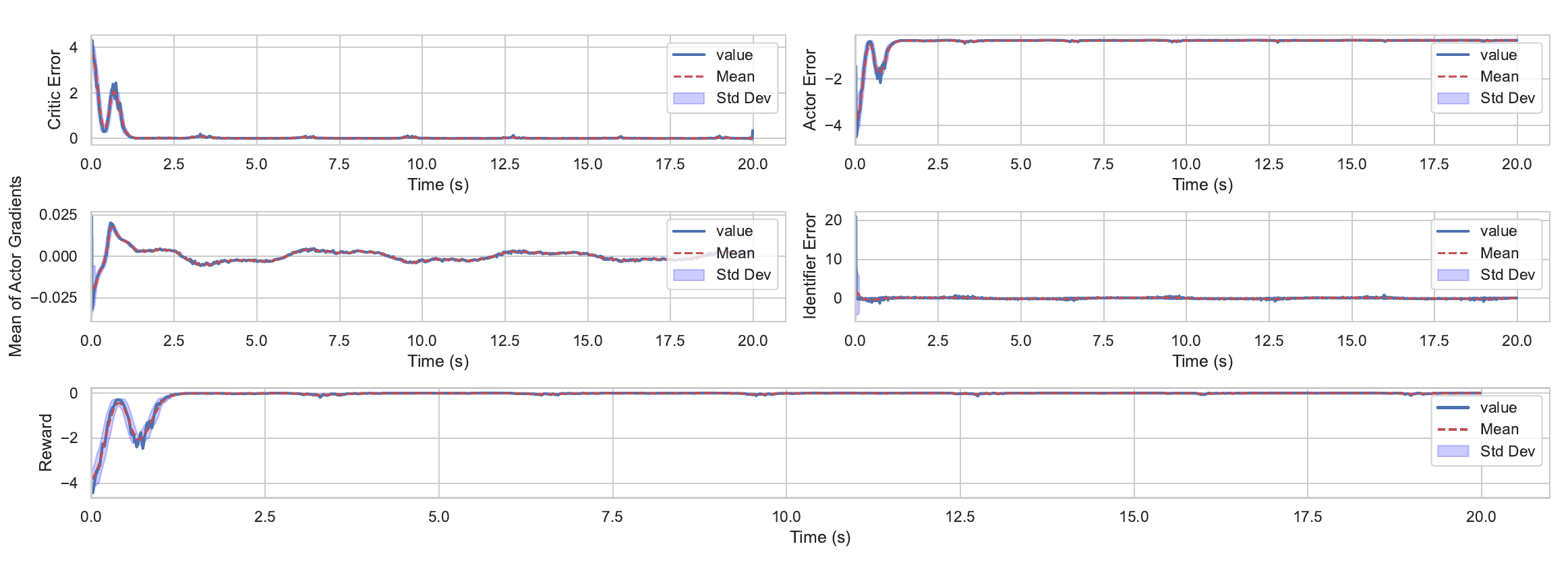}
        \vspace{-18px}
        \caption{Critic error.}
        \label{fig:Results_nonlinear1}
        \vspace{2px}
    \end{subfigure}
    \hfill
    \begin{subfigure}[b]{0.95\linewidth}
        \centering
        \includegraphics[width=1\textwidth]{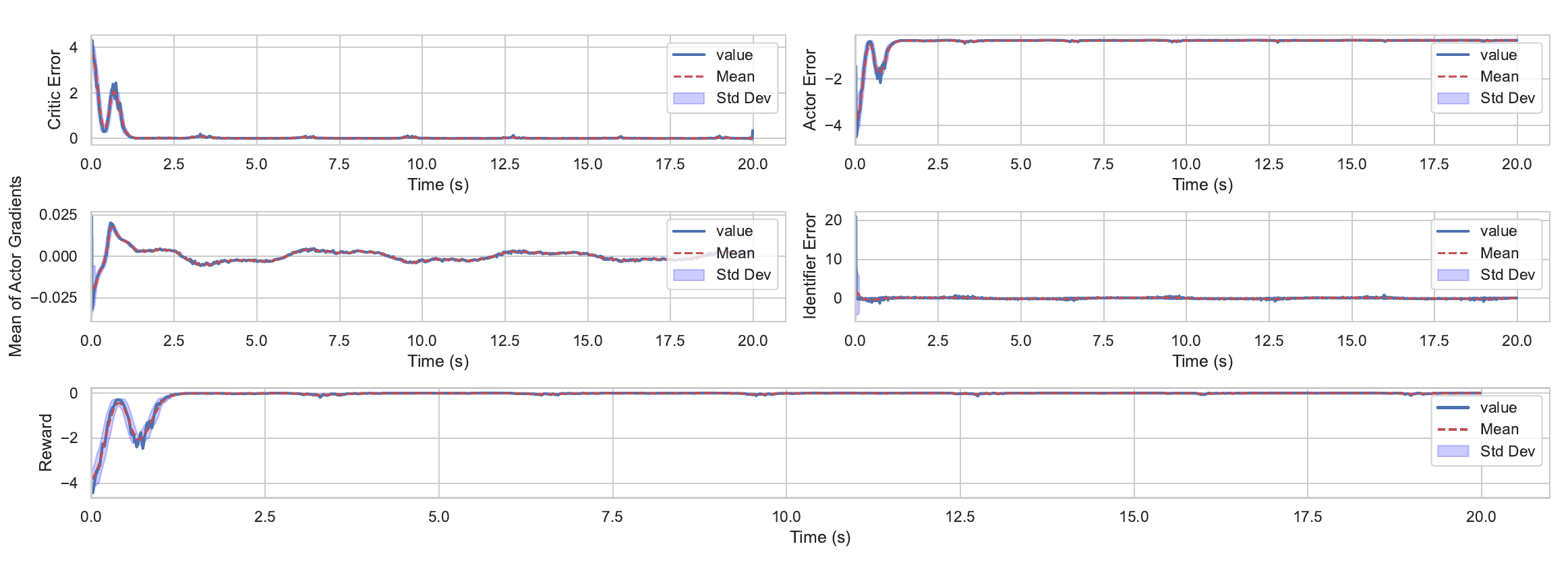}
        \vspace{-18px}
        \caption{Actor error.}
        \label{fig:Results_nonlinear2}
        \vspace{2px}
    \end{subfigure}
    \hfill
    \begin{subfigure}[b]{0.95\linewidth}
        \centering
        \includegraphics[width=1\textwidth]{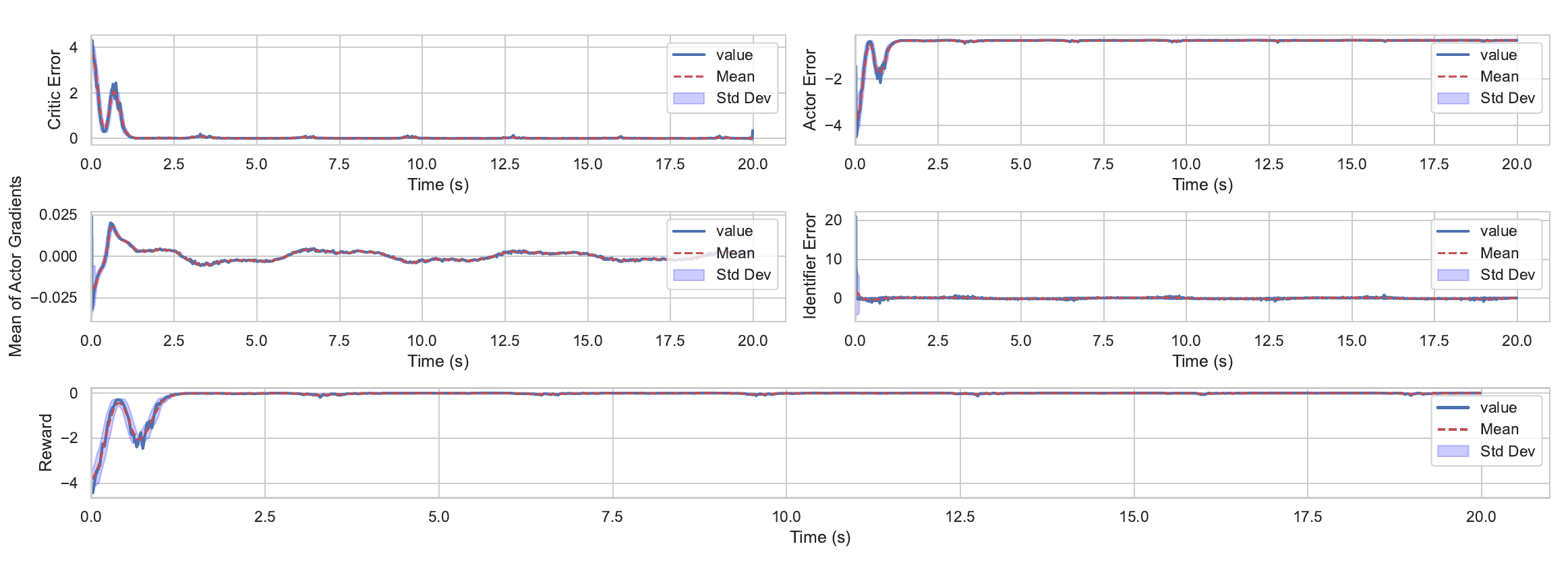}
        \vspace{-18px}
        \caption{Identifier error.}
        \label{fig:Results_nonlinear3}
        \vspace{2px}
    \end{subfigure}
    \hfill
    \begin{subfigure}[b]{0.97\linewidth}
        \centering
        \includegraphics[width=1\textwidth]{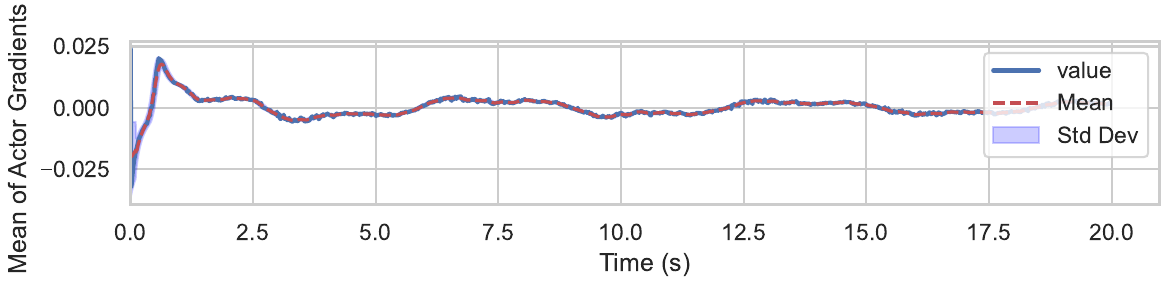}
        \vspace{-18px}
        \caption{Mean of actor gradients.}
        \label{fig:Results_nonlinear4}
    \end{subfigure}
    \caption{Convergence error for AIC controller components.}
    \label{fig:Results_nonlinear}
    \vspace{-5px}
\end{figure}
The value space estimated by the critic is illustrated in Fig.~\ref{fig:Value_space_nonlinear}. This value function has a quadratic form as a result of choosing proper activation functions. 
\begin{figure}[h]
    \centering
    \includegraphics[width=.7\linewidth]{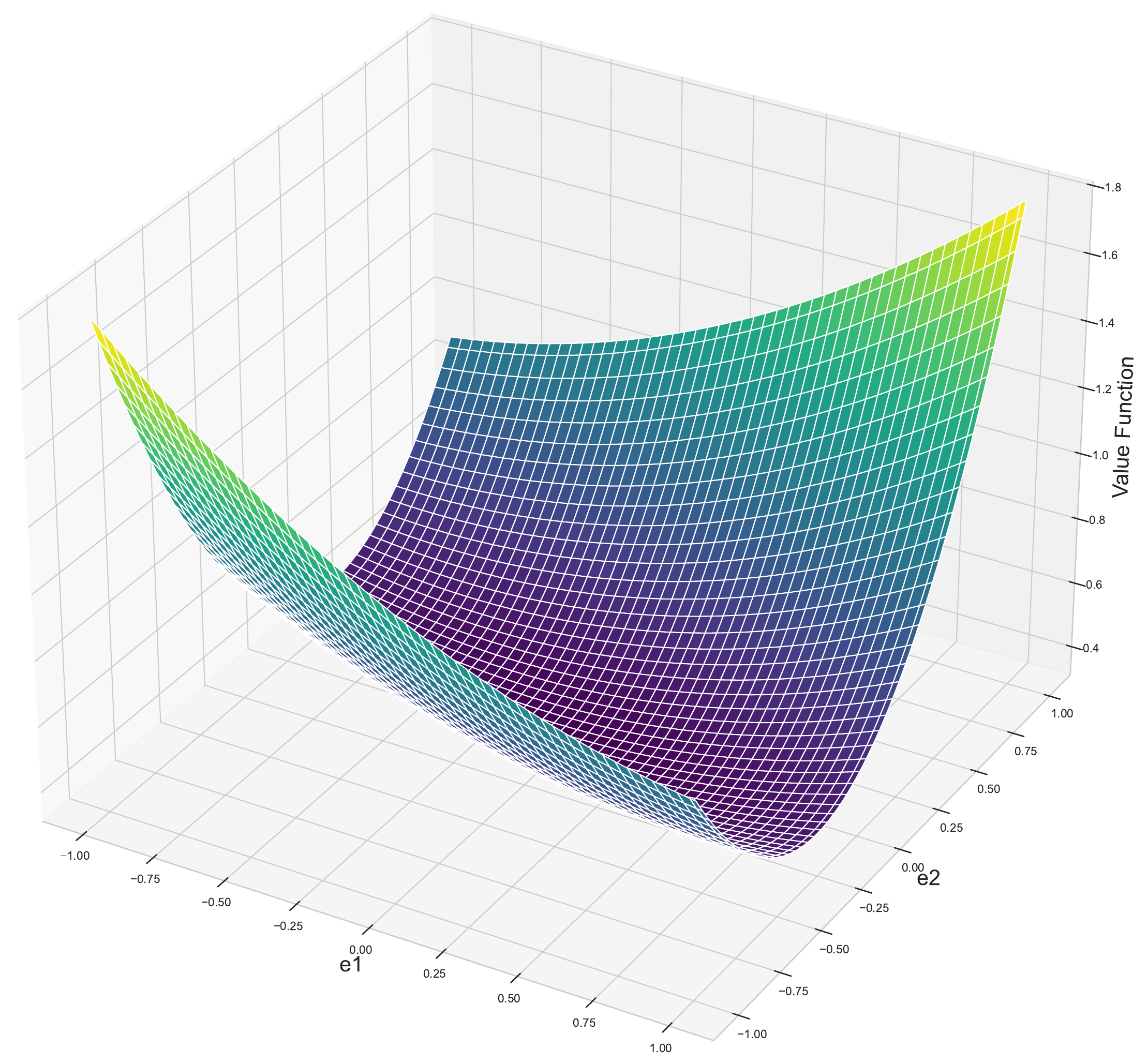}
    \caption{Estimated value space by critic for SIMO system.}
    \label{fig:Value_space_nonlinear}
\end{figure}

To show the identifier performance, the direction of the estimated function $\hat{g}(\cdot)$ is compared with the direction of $g(\cdot)$, which is defined in~\eqref{eq:sys_dynamic}. As shown in Fig.~\ref{fig:identifier_stability_ex1}, $\hat{g}(\cdot)$ and $g(\cdot)$ are in the same direction, which confirms the effectiveness of the identifier.
\begin{figure}[h]
    \centering
    \includegraphics[width=.95\linewidth]{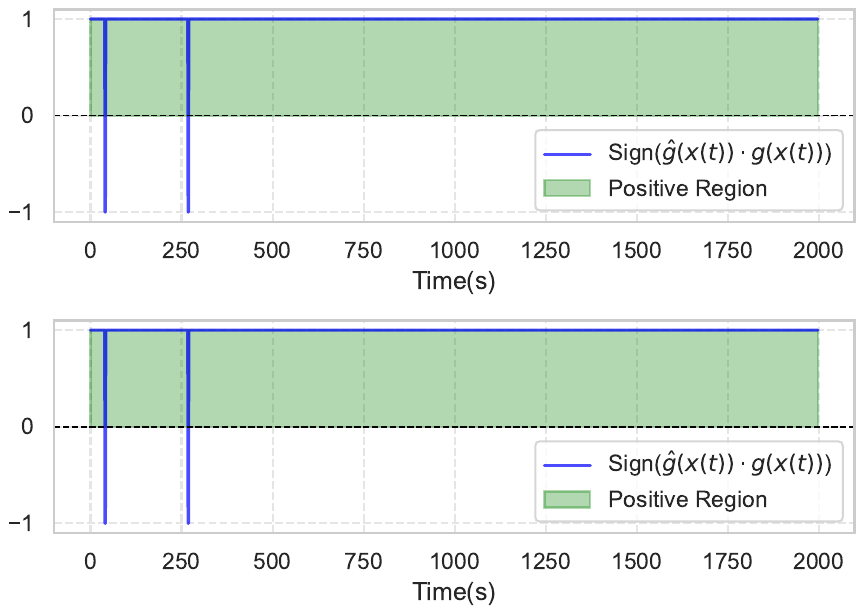}
    \caption{Alignment of identifier gradient direction with the SIMO system dynamics direction.}
    \label{fig:identifier_stability_ex1}
\end{figure}

\subsection{Nonlinear MIMO System}\label{sec:scenario_b}
In the second scenario, the proposed control algorithm is applied to a nonlinear MIMO system with two states and two control inputs. The matrices of the system dynamics are defined as follows~\cite{wang2019adaptive}:
\begin{equation}
f(x) = \begin{bmatrix} 
        x_2 - x_1 \\ 
        0.5(x_1 x_2 - x_1) 
      \end{bmatrix} 
, \quad
g(x) = \begin{bmatrix} 
        0 & 3 + x_2 \\ 
        1 + x_1 & 0 
      \end{bmatrix} 
\end{equation}
The desired state trajectory is considered based on~\cite{wang2019adaptive}.
The value function for the state tracking problem is considered as follows:
\begin{equation}
    \label{eq:ex1-value}
	\mathbf{V}(e(t), \hat{u}_c(t)) = \int_t^{\infty} \frac{1}{2} e_1^2(\tau) + e_2^2(\tau) + 5\times 10^{-3}\hat{u}^{2}_c(\tau) d\tau,
\end{equation}
The activation functions of the critic are defined in~\eqref{eq:ex1-value}.
The control method is simulated for 30 seconds with parameters represented in Table~\ref{table:ex2param}.
\begin{table}[H]
    \centering
    \caption{Simulation parameters for MIMO system.}
    \vspace{-5pt}
    \label{table:ex2param}
    \renewcommand{\arraystretch}{1.2}
    \resizebox{0.9\linewidth}{!}
    {\begin{tabular}{lccc}
        \toprule
        Parameter & Actor & Identifier & Critic\\
        \midrule
        Number of neurons & $64$ & $2$ & $3$ \\
        Activation & Sigmoid & Sigmoid & $\Psi_c$\\
        Learning rate & $10$ & $10^-3$ & $10^-3$ \\
        Dropout rate & $\gamma_c = 0.7$ & $\gamma_p = 0.8$ & $-$ \\
        \bottomrule
    \end{tabular}}
\end{table}


The simulation results for the tracking process are represented in Fig.~\ref{fig:states_MIMO}. It is shown that system states track the desired trajectory. 
It can be concluded that the tracking error converges to a bounded range.

\begin{figure}[t]
    \centering
    \begin{subfigure}[b]{1\linewidth}
        \centering
        \includegraphics[width=1\textwidth]{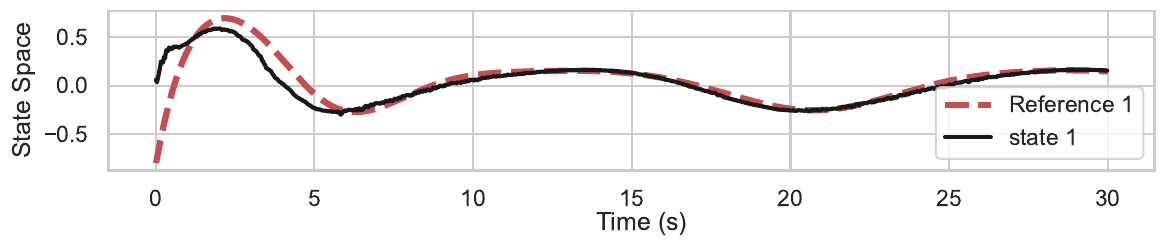}
        \vspace{-18px}
        \caption{First system state.}
        \label{fig:states_MIMO1}
        \vspace{2px}
    \end{subfigure}
    \hfill
    \begin{subfigure}[b]{1\linewidth}
        \centering
        \includegraphics[width=1\textwidth]{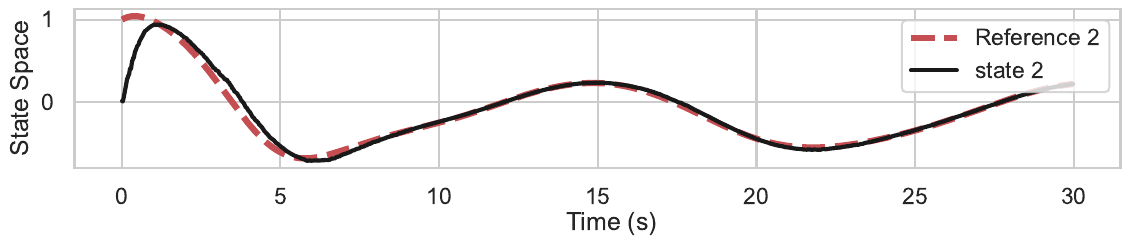}
        \vspace{-18px}
        \caption{Second system state.}
        \label{fig:states_MIMO2}
        \vspace{2px}
    \end{subfigure}
    \hfill
    \begin{subfigure}[b]{1\linewidth}
        \centering
        \includegraphics[width=1\textwidth]{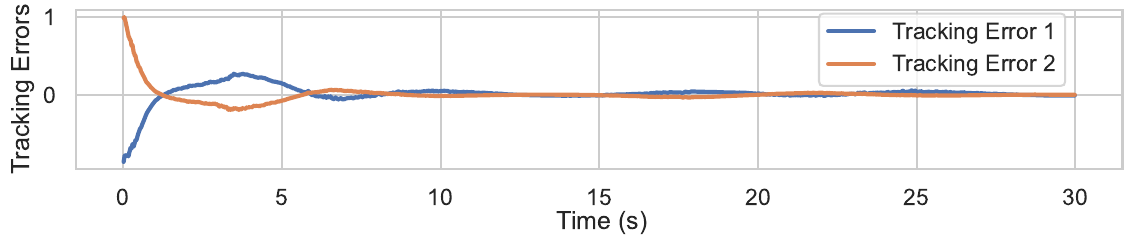}
        \vspace{-18px}
        \caption{State tracking errors.}
        \label{fig:states_MIMO3}
        \vspace{2px}
    \end{subfigure}
    \hfill
    \begin{subfigure}[b]{1\linewidth}
        \centering
        \includegraphics[width=1\textwidth]{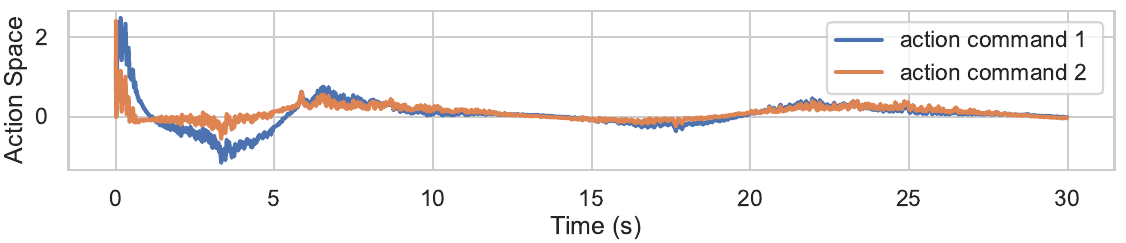}
        \vspace{-18px}
        \caption{Control commands.}
        \label{fig:states_MIMO4}
        \vspace{2px}
    \end{subfigure}
    \hfill
    \begin{subfigure}[b]{1\linewidth}
        \centering
        \includegraphics[width=1\textwidth]{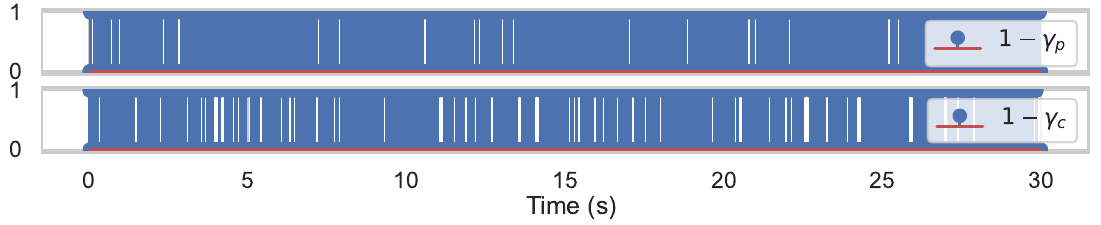}
        \vspace{-18px}
        \caption{Packet dropout events.}
        \label{fig:states_MIMO5}
    \end{subfigure}
    \caption{Tracking control results for MIMO system.}
    \label{fig:states_MIMO}
\end{figure}


The error values for AIC controller networks are shown in Fig.~\ref{fig:Results_MIMO}. As expected, the networks are stable and the error values converge to their respective bounded range.

\begin{figure}[h]
    \centering
    \begin{subfigure}[b]{1\linewidth}
        \centering
        \includegraphics[width=1\textwidth]{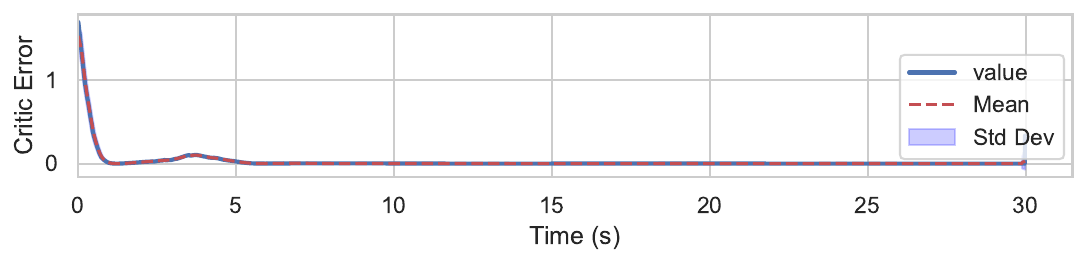}
        \vspace{-18px}
        \caption{Critic error.}
        \label{fig:Results_MIMO1}
        \vspace{2px}
    \end{subfigure}
    \hfill
    \begin{subfigure}[b]{1\linewidth}
        \centering
        \includegraphics[width=1\textwidth]{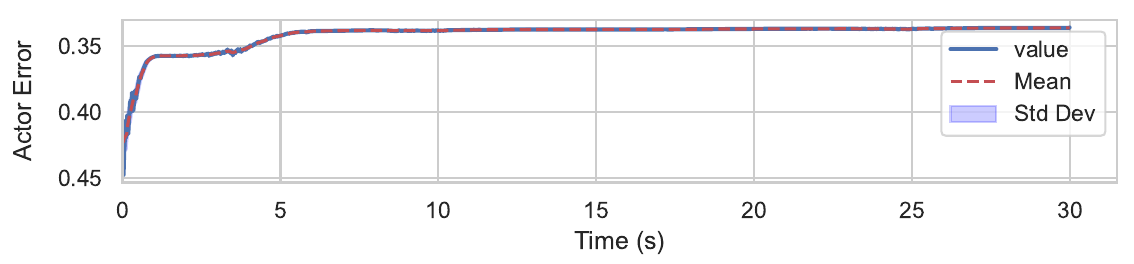}
        \vspace{-18px}
        \caption{Actor error.}
        \label{fig:Results_MIMO2}
        \vspace{2px}
    \end{subfigure}
    \hfill
    \begin{subfigure}[b]{1\linewidth}
        \centering
        \includegraphics[width=1\textwidth]{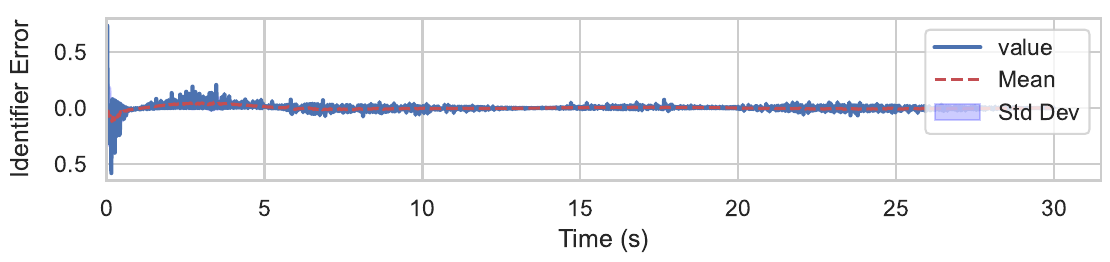}
        \vspace{-18px}
        \caption{Identifier error.}
        \label{fig:Results_MIMO3}
        \vspace{2px}
    \end{subfigure}
    \hfill
    \begin{subfigure}[b]{1\linewidth}
        \centering
        \includegraphics[width=1\textwidth]{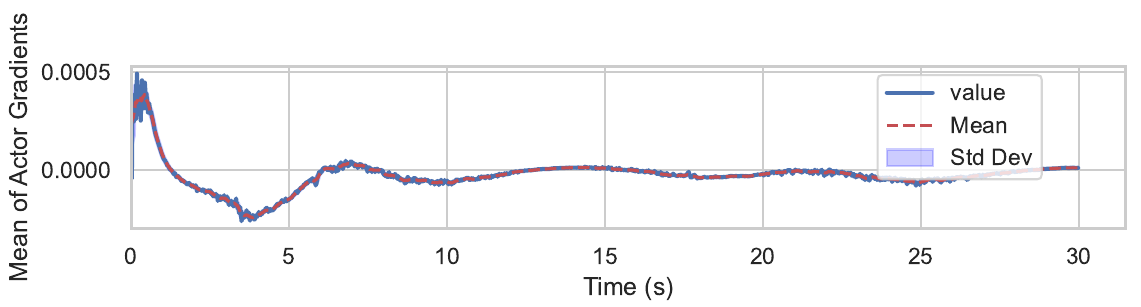}
        \vspace{-18px}
        \caption{Mean of actor gradients.}
        \label{fig:Results_MIMO4}
    \end{subfigure}
    \caption{Convergence error for AIC controller components.}
    \label{fig:Results_MIMO}
    \vspace{-10px}
\end{figure}

The estimated value space by the critic network is shown in Fig.~\ref{fig:Value_space_MIMO}. As shown in Fig.~\ref{fig:UUB_ex2}, the actor uses this value estimate to optimize the control policy.
The identifier performance is represented in Fig.\ref{fig:identifier_stability_ex2}. It can be seen that the gradient of the identifier is in the correct direction for all states.

\begin{figure}[h]
    \centering
    \begin{subfigure}[b]{0.55\linewidth}
        \centering
        \includegraphics[width=1\textwidth]{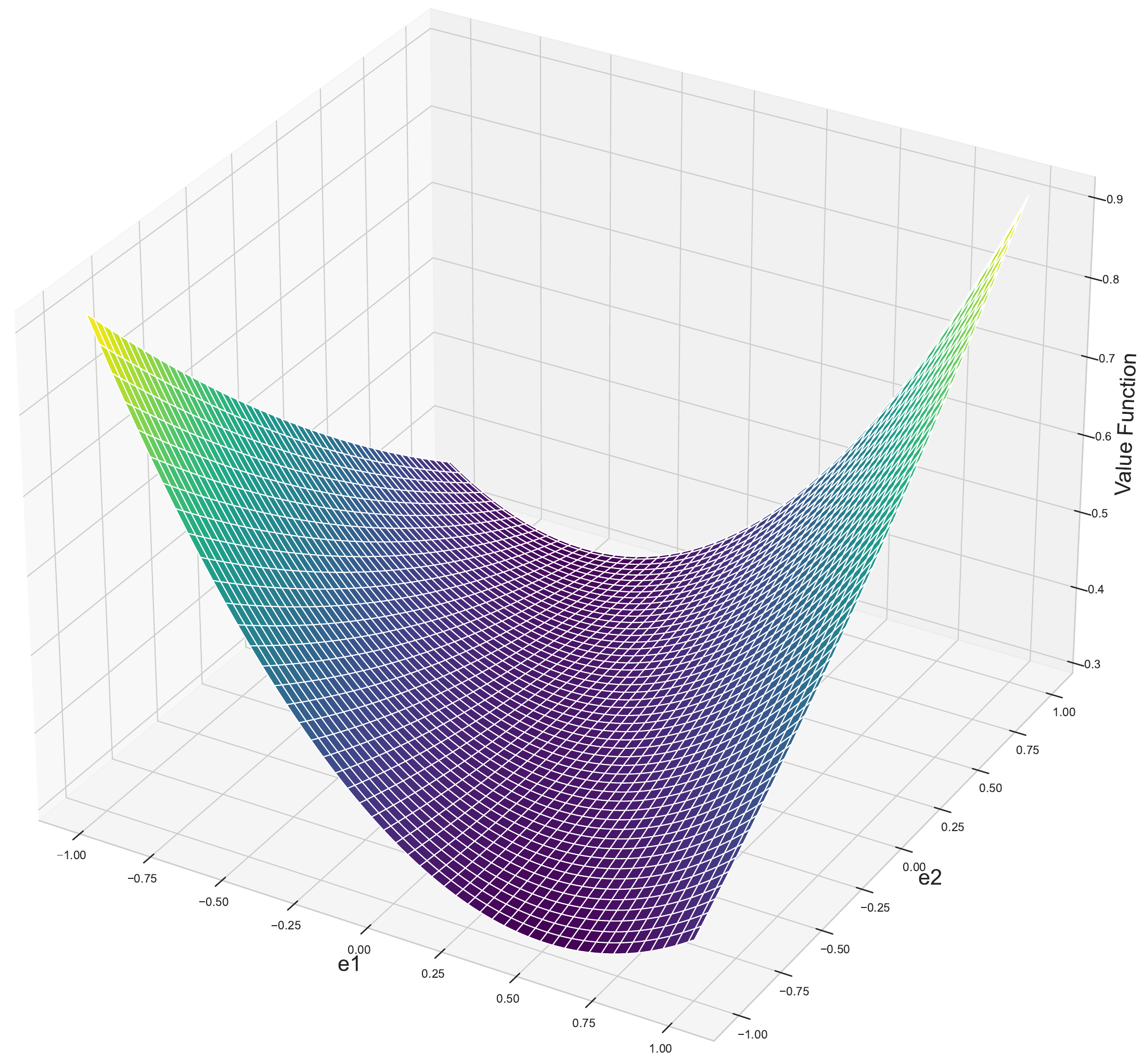}
        \vspace{-15px}
        \caption{Estimated value space.}
        \label{fig:Value_space_MIMO}
        \vspace{2px}
    \end{subfigure}
    \hfill
    \begin{subfigure}[b]{0.43\linewidth}
        \centering
        \includegraphics[width=1\textwidth]{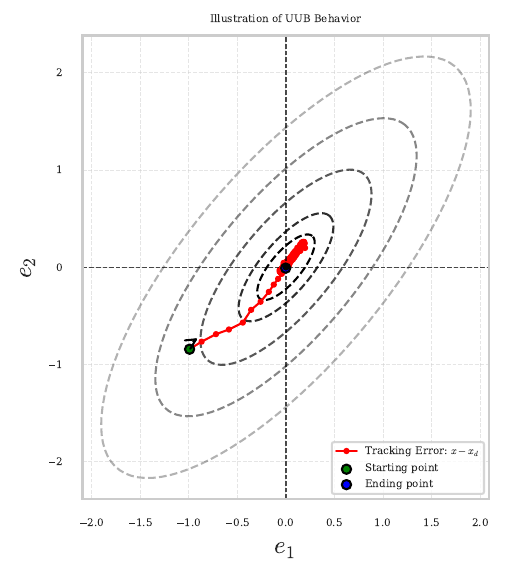}
        \vspace{-15px}
        \caption{System tracking error.}
        \label{fig:UUB_ex2}
        \vspace{2px}
    \end{subfigure}
    \caption{Value estimate and its respective policy behavior.}
    \label{fig:Results_MIMO}
    \vspace{-10px}
\end{figure}


\begin{figure}[h]
    \centering
    \includegraphics[width=1\linewidth]{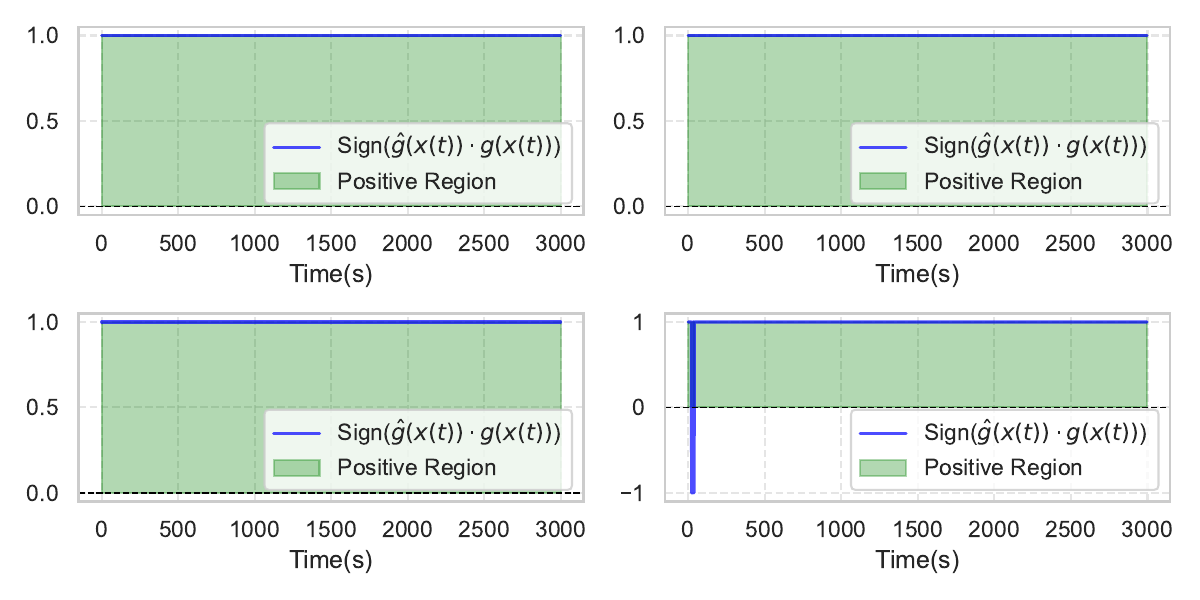}
    \caption{Alignment of identifier gradient direction with the MIMO system dynamics direction.}
    \label{fig:identifier_stability_ex2}
\end{figure}

\subsection{Quantitative analysis}

The performance of the AIC controller for the systems described in~\ref{sec:scenario_a} and~\ref{sec:scenario_b} is evaluated for five dropout scenarios. These scenarios are considered as: no dropouts, dropouts on the control-to-plant communication side, dropouts on the plant-to-control communication side, dropouts on both sides with low and equal occurrence probabilities, and dropouts on both sides with high but different occurrence probabilities. 


We measure two performance criteria for the evaluation: Normalized Root Mean Square Error (NRMSE) and the Pearson Correlation Coefficient (PCC).
The NRMSE is defined as:
\begin{equation}
    \label{eq:NRMSE}
    \text{NRMSE} = 
    \frac{
    \sqrt{
    \frac{1}{2N} \sum_{i=1}^{N} \sum_{j=1}^{2} (y_{i,j} - \hat{y}_{i,j})^2
    }
    }{
    \frac{1}{2} \sum_{j=1}^{2} \left( \underset{\scalebox{.5}{$1 \leq i \leq N$}}{\max} y_{i,j} - \underset{\scalebox{.5}{$1 \leq i \leq N$}}{\min} y_{i,j} \right)
    }
\end{equation}

where $y_{i,j}$ and $\hat{y}_{i,j}$ represent actual and desired state trajectories, respectively. $N$ is the number of steps in the simulation. The summation is applied to 2 states of the system.

The PCC criteria is calculated as:
\begin{equation}
    \text{PCC} = \frac{\sum_{i=1}^{N} (y_i - \bar{y})(\hat{y}_i - \bar{\hat{y}})}{\sqrt{\sum_{i=1}^{N} (y_i - \bar{y})^2} \sqrt{\sum_{i=1}^{N} (\hat{y}_i - \bar{\hat{y}})^2}}
\end{equation}
where $\bar{y}$ and $\bar{\hat{y}}$ are the mean values for actual and desired state trajectories.

\begin{table*}[b]
	\centering
	\caption{Tracking error results for SIMO and MIMO systems.}
	\renewcommand{\arraystretch}{1.3}
	\begin{tabular}{cccccccccccccc}
		\toprule
		\multirow{3}{*}{\textbf{System}} & \multirow{3}{*}{\textbf{}} & \multirow{3}{*}{\textbf{}} & \multicolumn{2}{c}{$\bar{\gamma}_s = 1 \quad \bar{\gamma}_c = 1$} & \multicolumn{2}{c}{$\bar{\gamma}_s = 1 \quad \bar{\gamma}_c = 0.8$} & \multicolumn{2}{c}{$\bar{\gamma}_s = 0.8 \quad \bar{\gamma}_c = 1$} & \multicolumn{2}{c}{$\bar{\gamma}_s = 0.9 \quad \bar{\gamma}_c = 0.9$} & \multicolumn{2}{c}{$\bar{\gamma}_s = 0.8 \quad \bar{\gamma}_c = 0.7$} \\
		\cmidrule(lr){4-5} \cmidrule(lr){6-7} \cmidrule(lr){8-9} \cmidrule(lr){10-11} \cmidrule(lr){12-13}
		& & & \textbf{NRMSE} & \textbf{PCC} & \textbf{NRMSE} & \textbf{PCC} & \textbf{NRMSE} & \textbf{PCC} & \textbf{NRMSE} & \textbf{PCC} & \textbf{NRMSE} & \textbf{PCC} \\
		\midrule
		
		\multirow{7}{*}{\rotatebox[origin=c]{90}{SIMO}} & \multirow{3}{*}{\rotatebox[origin=c]{90}{OTE}} & MAX & 0.2442 & 0.9868 & 0.2152 & 0.9917 & 0.1233 & 0.9945 & 0.1715 & 0.9927 & 0.1931 & 0.9916 \\
		&  & MEAN & 0.1611 & 0.9725 & 0.1458 & 0.9756 & 0.0998 & 0.9885 & 0.1239 & 0.9815 & 0.1248 & 0.9799 \\
		& & STD & 0.0453 & 0.0127 & 0.0438 & 0.0108 & 0.0191 & 0.0036 & 0.0302 & 0.0080 & 0.0456 & 0.0124 \\
		 
		\cmidrule(lr){2-13}
		& \multirow{3}{*}{\rotatebox[origin=c]{90}{PCTE}}  & MAX & 0.0737 & 0.9999 & 0.0684 & 0.9997 & 0.0455 & 0.9999 & 0.0287 & 0.9998 & 0.0340 & 0.9997 \\
		& & MEAN & 0.0389 & 0.9988 & 0.0297 & 0.9993 & 0.0207 & 0.9998 & 0.0237 & 0.9996 & 0.0281 & 0.9995 \\
		& & STD & 0.0200 & 0.0009 & 0.0178 & 0.0006 & 0.0114 & 0.0001 & 0.0042 & 0.0001 & 0.0047 & 0.0001 \\
		\cmidrule(lr){2-13}
		& \multicolumn{2}{l}{Settle Time} &  \multicolumn{2}{c}{2.224 sec} &  \multicolumn{2}{c}{2.124 sec}&  \multicolumn{2}{c}{1.792 sec}  &  \multicolumn{2}{c}{1.937 sec}  & \multicolumn{2}{c}{1.871 sec} \\
		\midrule
		
		\multirow{7}{*}{\rotatebox[origin=c]{90}{MIMO}} & \multirow{3}{*}{\rotatebox[origin=c]{90}{OTE}} & MAX & 0.1404 & 0.9759 & 0.2447 & 0.9655 & 0.1895 & 0.9556 & 0.2651 & 0.9506 & 0.2837 & 0.9253 \\
		& & MEAN & 0.1125 & 0.9625 & 0.1745 & 0.9153 & 0.1729 & 0.9404 & 0.1935 & 0.9231 & 0.2301 & 0.8979 \\
		& & STD & 0.0183 & 0.0092 & 0.0629 & 0.0390 & 0.0162 & 0.0085 & 0.0383 & 0.0280 & 0.0300 & 0.0237 \\ 
		\cmidrule(lr){2-13}
		& \multirow{3}{*}{\rotatebox[origin=c]{90}{PCTE}}  & MAX &  0.0599 & 0.9975 & 0.0695 & 0.9984 & 0.0823 & 0.9963 & 0.0766 & 0.9993 & 0.0774 & 0.9952 \\
		& & MEAN & 0.0449 & 0.9965 & 0.0551 & 0.9943 & 0.0666 & 0.9937 & 0.0512 & 0.9946 & 0.0670 & 0.9926 \\ 
		& & STD & 0.0075 & 0.0013 & 0.0135 & 0.0021 & 0.0144 & 0.0022 & 0.0198 & 0.0027 & 0.0101 & 0.0014 \\
		\cmidrule(lr){2-13}
		& \multicolumn{2}{l}{Settle Time} &  \multicolumn{2}{c}{5.796 sec} &  \multicolumn{2}{c}{6.231 sec}&  \multicolumn{2}{c}{6.424 sec}  &  \multicolumn{2}{c}{5.037 sec}  & \multicolumn{2}{c}{5.278 sec} \\
		\bottomrule
	\end{tabular}
    \label{tab:eval1}
\end{table*}

The NRMSE and PCC are used to assess the accuracy and correlation of the system output, respectively. Each experiment is repeated 5 times for statistical robustness, with the maximum, mean, and standard deviation (STD) values reported from these repetitions. 
The results are presented for two specific scenarios: overall tracking error (OTE) and post-convergence tracking error (PCTE). The OTE represents the system performance for the entire simulation time, while the PCTE shows the controller performance in maintaining accurate tracking after convergence. 

The simulation results in Table~\ref{tab:eval1} show that the NRMSE is low for all dropout scenarios. Moreover, the PCC is close to $1$ in all dropout scenarios. It can be concluded that, despite dropouts, tracking remains within a small bounded range.
The small STD in the simulations demonstrates the reliable performance of the AIC controller. 

\subsection{Case study: dynamic stability in power systems}
This case study analyzes the dynamic behavior of a power system with high IBR.
In conventional power systems, the rotational inertia of synchronous generators helps stabilize frequency during sudden power imbalances. However, inverter-based systems lack this inherent inertia; therefore, a virtual synchronous machine (VSM) controller is employed to emulate synchronous generator dynamics and maintain transient stability of the grid.
In a general representation of a nonlinear system~\eqref{eq:sys_dynamic}, the dynamic matrices $f$ and $g$ of the VSM are given by~\cite{10016258}:
\begin{align}
&f(x) = 
\begin{bmatrix}
\omega - \omega_{\text{nom}} \\
\frac{1}{H} \left( -D_p(\omega - \omega_{\text{nom}}) - \frac{EV_c}{X_\text{eq}} \sin(\delta) \right)
\end{bmatrix}\\[5pt]
&g(x) = 
\begin{bmatrix}
0 \\
\frac{1}{H}
\end{bmatrix}
\end{align}
Here, $\omega_\text{nom}$ is the nominal grid frequency, $D_p$ is the damping factor, and $H$ is the virtual inertia coefficient which is set to a small value, making the system more sensitive to power imbalances and prone to instability. The maximum dispatchable power is given by $P_\text{max}=\frac{EV_c}{X_\text{eq}}$, where $E$ is the inverter output voltage, $V_c$ is the grid voltage, and $X_\text{eq}$ is the equivalent reactance.
The system frequency is initialized at $50$~Hz, and the controller aims to maintain stability at $\omega_d = \omega_\text{nom} = 50$~Hz after a power imbalance.

\begin{figure}[h]
    \centering
    \begin{subfigure}[b]{1\linewidth}
        \centering
        \includegraphics[width=1\textwidth]{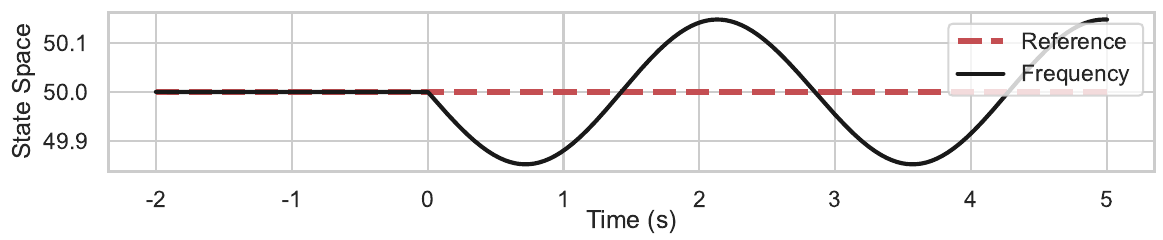}
        \vspace{-18px}
        \caption{No control.}
        \label{fig:vsm_noc}
        \vspace{2px}
    \end{subfigure}
    \hfill
    \begin{subfigure}[b]{1\linewidth}
        \centering
        \includegraphics[width=1\textwidth]{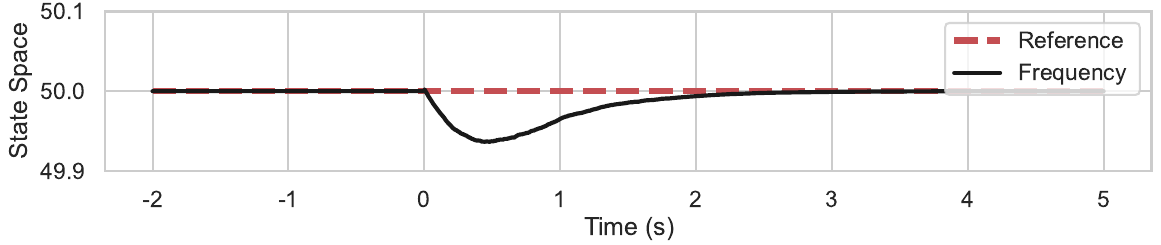}
        \vspace{-18px}
        \caption{AIC controller.}
        \label{fig:vsm_aic}
        \vspace{2px}
    \end{subfigure}
    \hfill
    \begin{subfigure}[b]{1\linewidth}
        \centering
        \includegraphics[width=1\textwidth]{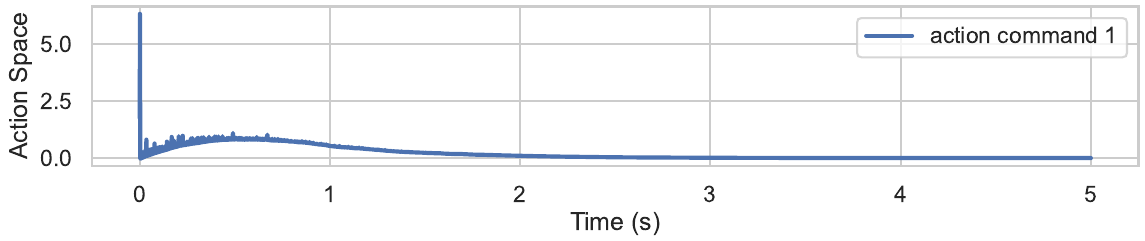}
        \vspace{-18px}
        \caption{Control commands.}
        \label{fig:vsm_action}
        \vspace{2px}
    \end{subfigure}
    \hfill
    \begin{subfigure}[b]{1\linewidth}
        \centering
        \includegraphics[width=1\textwidth]{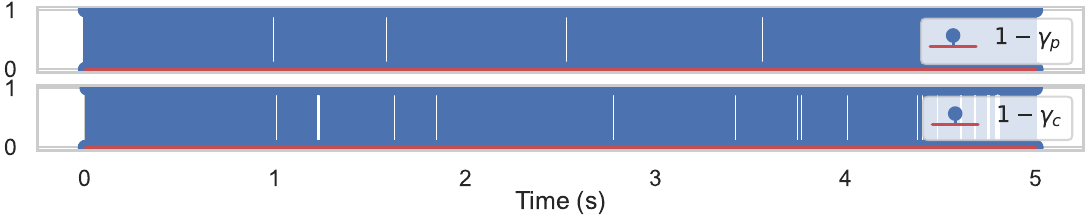}
        \vspace{-18px}
        \caption{Packet dropout events.}
        \label{fig:vsm_dropout}
    \end{subfigure}
    \caption{VSM frequency control.}
    \label{fig:vsm_results}
\end{figure}

The results of VSM frequency control are presented in Fig.~\ref{fig:vsm_results}. A power imbalance occurs at~$t=0$, which causes frequency deviations. Due to a lack of rotational inertia in VSM, the frequency deviation increases and becomes unstable~(\ref{fig:vsm_noc}). When the proposed AIC controller is applied, the frequency is restored to $50$~Hz and remains stable~(\ref{fig:vsm_aic}). This is achieved without prior knowledge of the system model and in the presence of packet dropouts~(\ref{fig:vsm_dropout}), which confirms the applicability of the proposed AIC controller.
\section{Conclusion}\label{sec5}
This study investigates the nonlinear tracking control problem in the presence of packet dropouts by proposing a novel model-free adaptive AIC controller. The presented strategy uses a triple-network structure, consisting of actor, identifier, and critic. Using an NN-based identifier, the controller estimates unknown system dynamics in the presence of stochastic packet dropouts and facilitates gradient transfer from the critic to the actor.
The optimality of the controller is studied using the HJB equation, and its stability analysis is provided based on Lyapunov’s stability theorem.
The controller performance is evaluated over three case studies, SIMO, MIMO, and a power system case study in the presence of packet dropouts. 
The results demonstrate that the proposed AIC controller effectively controls unknown nonlinear systems with bounded tracking error. Future research will address other networked control system challenges, such as communication delay.

\printbibliography

 




\end{document}